\documentclass[11pt,onecolumn]{IEEEtran}

\usepackage{mathpazo} 

\usepackage{cite}
\usepackage{graphicx,color,epsfig}
\usepackage{amsfonts,amsmath,amssymb}

\long\def\comment#1{}


\newfont{\bbb}{msbm10 scaled 700}

\newfont{\bb}{msbm10 scaled 1100}






\newtheorem{definition}{Definition}
\newtheorem{theorem}{Theorem}
\newtheorem{lemma}{Lemma}
\newtheorem{corollary}{Corollary}

\begin{document}

\title{Local Graph Coloring and Index Coding}
\author{
\IEEEauthorblockN{Karthikeyan Shanmugam \IEEEauthorrefmark{1}, Alexandros G. Dimakis \IEEEauthorrefmark{1} and Michael Langberg \IEEEauthorrefmark{2}}
 
\IEEEauthorblockA{\IEEEauthorrefmark{1}
Department of Electrical and Computer Engineering \\
University of Texas at Austin \\
Austin, TX 78712-1684 \\
karthiksh@utexas.edu,dimakis@austin.utexas.edu}

\IEEEauthorblockA{\IEEEauthorrefmark{2}
Department of Mathematics and Computer Science \\
The Open University of Israel \\
Raanana 43107, Israel \\
\{mikel\}@open.ac.il}
}
\date{\today}

\maketitle

\begin{abstract}
	We present a novel upper bound for the optimal index coding rate. Our bound uses 
	a graph theoretic quantity called the local chromatic number. We show how a good 
	local coloring can be used to create a good index code. The local coloring 
	is used as an alignment guide to assign index coding vectors from a general position MDS code. 
	We further show that a natural
	LP relaxation yields an even stronger index code. 
	Our bounds provably outperform the state of the art 
	on index coding but at most by a constant factor. 
\end{abstract}	 
	

\section{Introduction}

Index coding is a multiuser communication problem that is recently receiving significant attention. 
It is perhaps the simplest possible network problem (since it can mapped 
on a communication network with only one finite capacity link) and yet remarkable connections to many 
information theory problems have been recently discovered. The problem was introduced by Birk and Kol~\cite{birk1998informed} and has received significant attention, see \textit{e.g.} 
\cite{bar2006index,alon2008broadcasting,blasiak2010index,maleki2012index}.
It was very recently shown that any (even nonlinear) network coding problem on an arbitrary graph 
can be reduced to an equivalent index coding instance~\cite{effros2012equivalence}. 
Further, intriguing connections between index coding and the concept of interference alignment appear in~\cite{maleki2012index}.
Our work addresses the construction of index codes.
Conceptually, we view index coding as an interference alignment problem, and show how index codes can be created by exploiting a special type of graph coloring. 

In 1986, Erd\H{o}s \textit{et al.}~\cite{erdos1986coloring} defined the \textit{local chromatic number} of a graph. Given an undirected graph $G$, a coloring of the vertices is called proper if no two adjacent vertices receive identical colors. The \textit{chromatic number} $\chi(G)$ is the minimum number of colors needed for such a proper coloring. The local chromatic number $\chi_{\ell}(G)$ is defined as the maximum number of different colors that appear in the closed neighborhood of any vertex, minimized over all proper colorings. Here, a closed neighborhood of vertex $v$ includes $v$ and all its neighbored vertices.
For example, the five cycle $C_5$ has local chromatic number of $3$ obtained by {\em any} valid coloring (as the closed neighborhood of any vertex in $C_5$ is of size $3$) and it also has a chromatic number of $3$.
In general, it is clear that 
\[
\chi_{\ell}(G) \leq \chi(G).
\]  
Erd\H{o}s \textit{et al.}~\cite{erdos1986coloring} showed the non-trivial fact that the local chromatic number can indeed be arbitrarily smaller than $\chi(G)$.

Index coding is defined on a side-information graph $G$. Significant recent attention~\cite{bar2006index,maleki2012index,blasiak2010index} has been focused on obtaining bounds for the optimal communication rate $\beta(G)$. An achievable 
scheme for index coding on undirected graphs is the number of cliques required to cover $G$, which is equal to 
the chromatic number of the complement graph $\chi (\bar{G})$. This corresponds to the well-known bound
\[ 
\beta(G) \leq \chi (\bar{G}).
\]
The natural LP relaxation of this quantity is called the \textit{fractional chromatic number} $\chi_f(G)$ which also corresponds to an achievable (vector-linear) index code (as shown in \cite{blasiak2010index}). Therefore, it is known that 
\begin{equation}
	\beta(G) \leq \chi_f (\bar{G}) \leq \chi (\bar{G}),
	\label{frac_bound}
\end{equation}
and both inequalities can be strict for certain graphs. The fractional chromatic number is the best known general bound for index coding~\cite{blasiak2010index}.

\subsection{Our contribution} 

In this paper, we show that the \textit{local chromatic number} provides an achievable index coding bound:
\[
	\beta(G) \leq \chi_{\ell} (\bar{G}).
\]
For directed graphs $G_d$, a natural generalization of the local chromatic number was defined by 
K{\"o}rner \textit{et al.}~\cite{korner2005local}. This was introduced to bound the Sperner capacity of a graph 
(the natural directed generalization of the Shannon graph capacity~\cite{korner1998zero}).

The local chromatic number of a directed graph $G_d$, denoted $\chi_{\ell} (G_d)$, is the number of colors in the 
most colorful closed out-neighborhood of a vertex (defined formally later). We show that for any directed side information graph $G_d$:
\[
	\beta(G_d) \leq \chi_{\ell} (\bar{G_d}).
\]
where $\bar{G}_d$ is the directed complement of $G_d$.
We also show that the natural LP relaxation of the local chromatic number, called the 
\textit{fractional local chromatic number} $\chi_{f \ell}$ is a stronger bound on index coding:
\[
	\beta(G_d) \leq \chi_{f \ell} (\bar{G_d}).
\]
Note that there exist (directed) graphs where the fractional local chromatic number is strictly smaller than the fractional chromatic number. 
\[
\chi_{f \ell}(G_d) < \chi_{f}(G_d).
\]

\subsection{Comparison with previous results} 

We investigate the relation of our bounds to previously known results. For undirected graphs $G$
(equivalently, bi-directed digraphs), previous graph theoretic work~\cite{korner2005local} established that 
\[
\chi_{f\ell} (G)=\chi_{f} (G).
\] 
Therefore for undirected graphs we obtain no new interesting bound. 

For directed graphs, however, we show that there can be a linear additive gap between 
the local chromatic number and the fractional chromatic number. We explicitly construct a directed graph 
where $\chi_{f}=n$ and  $\chi_{\ell} \leq  \frac{n}{2} +1$.

In terms of multiplicative gaps, we explicitly present a directed graph $\bar{G}_d$ for which
\[
\chi_f (\bar{G}_d)> (2.5244) \, \chi_{\ell}(\bar{G}_d).
\]
It was recently communicated to us \cite{Simonyipersonal} that the multiplicative gap cannot exceed the constant $e$ for any directed graph $\bar{G}_d$ , i.e.
\[ 
 \chi_f (\bar{G}_d)  \leq \text{e} \, \chi_{f \ell} (\bar{G}_d).
\]

In this work, we present a proof that the ratio $\chi_f(\bar{G}_d)/\chi_{f \ell} \left( \bar{G}_d\right)$ is at most a constant, obtained in parallel to our communication \cite{Simonyipersonal}.

\subsection{Discussion of Techniques}
Linear index coding can be mapped into a \textit {vector assignment} problem. For ease of exposition, we describe the case of scalar linear index coding. The goal is to design 
$n$ vectors $v_1,v_2,\ldots v_n$ that satisfy a set of linear independencies. Specifically, 
for each vector $v_i$, one is given a set of indices $S(i)$ and a requirement that
\[ 
v_i \notin \text{span} (v_{S(i)}),
\]
where we indicate the set of vectors having indices in $S(i)$ by $v_{S(i)}$.
From the interference alignment perspective~\cite{maleki2012index}, $S(i)$ are just 
the set of indices each corresponding to a packet that the user $i$ does not have as side information. We call it the \textit{interfering set} of indices (users) $S(i)$ for user $i$.

It is trivial to satisfy these requirements if the vectors $v_i$ lie in $n$ dimensions. 
The goal of scalar linear index coding is to minimize the dimension $k$ of these vectors while maintaining 
the required linear independencies. Now, we look at a related problem, i.e. a coloring problem. A valid \textit{proper coloring} of the indices is an assignment of colors to indices such that $i$ and $S(i)$ are assigned different colors. It is possible to obtain a vector assignment from a $k$ coloring solution by simply assigning vectors from the normal basis of length $k$ to each different color. When normal basis vectors are used, the resulting assignment of vectors is called  a \textit{coloring assignment}.

Assume, for example, that $v_1$ must be linearly independent from $v_2$ and $v_3$. We would like to make $\text{span} \{v_2,v_3\}$ to have a low dimension, if possible, to make it easier for $v_1$ to be outside this subspace. A coloring assignment would achieve this by re-using the same vector for $v_2$ and $v_3$ (same color) if other linear independency constraints allow that. 

As in the above example, in this paper, we use coloring of indices as an \textit{alignment guide} to later assign vectors to colors. Clearly, in any coloring intended for an assignment later (not necessarily a coloring assignment), $i$ and $S(i)$ have different colors since they must be assigned different vectors. However, it is possible to reduce the dimension in which the vectors lie by assigning a different set of vectors, in a lower dimension, to each color if we do not restrict to just using normal basis vectors. An equivalent way of thinking about coloring assignment is through MDS codes. An $(p,q)$ Maximum-Distance Separable (MDS) code is a set of $p$ vectors of length $q$ that are in general position, \textit{i.e.} any 
$q$ of the $p$ are linearly independent. The idea is to first color the indices and then create an MDS code and assign one MDS vector for each color.  A coloring assignment would mean that the code dimension $q$ and the number of vectors $p$  are both equal to the minimum colors used for coloring the indices (the chromatic number). This is just equivalent to a diagonal identity MDS code using normal basis vectors.

The key idea to go beyond the chromatic number is to realize that the bottleneck to the dimension $p$ is not really the total number of 
colors used but the maximum number of colors in an interfering set over all interfering sets $S(i)$. Therefore, if we can assign colors to the vertices so that the number of colors 
in the most colorful interfering set, i.e., the \textit{local chromatic number},  is bounded by $k^*$, we can construct an index code of that length: first create an $(p,k^*+1)$ MDS code over a sufficiently large field where $p$ is the total number of colors used and assign a vector to each color. Although each of the $n$ indices get a vector assignment, they lie in $k^*+1$ dimensions. We, therefore, still use the proper coloring of indices as an alignment guide but \textit{the 
local chromatic number is the metric that limits the code length} or equivalently the dimension of the vectors in the assignment.  In this description, we have used coloring of indices as a proxy for coloring of vertices in a suitable graph wherein the neighborhood of a vertex, corresponding to an index, reflects the linear independency constraints on a particular index.

The remainder of this paper is structured as follows. We start with precise definitions of the index coding and graph coloring concepts we need. We then state our results that show how index codes
can be created with the appropriate lengths related to various coloring numbers. We subsequently discuss the multiplicative and additive gaps between different coloring based bounds and conclude. 

	\section{Definitions}\label{Defsec}

In an index problem, there are $n$ users each requesting a \textit{distinct} packet $x_i,~\forall i \in \{1,2,3 \ldots n\}$ from a common broadcasting agent who needs to deliver these packets with a minimum number of bits over a public broadcast channel. In addition, each user has some side information packets denoted by $S(i)$, which is a subset of packets that other users want. Let $[n]$ denote the set $\{1,2 \ldots n \}$. Here, $S(i) \subseteq [n]-{x_i}$. This index coding problem with distinct user requests and individual user side information can be represented as a \textit{directed side information graph} $G_d(V,E_d)$ where $(i,j) \in E_d$, i.e. there is a directed edge from node $i$ to node $j$ if user $i$ has packet $x_j$ as side information. Each node corresponds to a user in this digraph. Note that a more general version of index coding allows multiple users 
to request the same packet. This corresponds to hypergraphs and we do not consider it in this paper. 
In what follows, we define the minimum broadcast rate for the index coding problem. To be consistent, we follow the definitions of Blasiak \textit{et al.} \cite{blasiak2010index} very closely.
 
         \begin{definition}
			(\textit{Valid index code}) Let $x_i \in \Sigma$, where $\lvert \Sigma \rvert =2^t$ for some integral $t$. Here, $x_i$  is the packet desired by user $i$. A \textit{ valid index code} over the alphabet $\Sigma$ is a set $(\phi,\{\gamma_i\})$ consisting of: 
  \begin{enumerate}
        \item An encoding function $\phi:\Sigma^n \rightarrow \{ 0,1\}^p$ which maps the $n$ messages to a transmitted message of length $p$ bits for some integral $p$. 
        \item   $n$ decoding functions $\gamma_i$ such that for every user $i$, $\gamma_i(\phi \left( x_1,x_2 \ldots x_n \right),\{x_j\}_{j \in S(i)}) = x_i$. In other words, every user would be able to decode its desired message from the transmitted message and the side information available at user $i$.
  \end{enumerate}
            $\hfill \lozenge$ 
\end{definition}

 The \textit{broadcast rate} $\beta_{\Sigma}(G_d,\phi, \{ \gamma_i \})$ of the $(\phi,\{ \gamma_i \})$ index code is the number of transmitted bits per received message bit at every user, i.e. $\beta_{\Sigma}(G_d,\phi,\{ \gamma_i\})= \frac{p}{\log_2 \lvert \Sigma \rvert}= \frac{p}{t}$.
        \begin{definition}  
             (\textit{Minimum broadcast rate:})  The minimum broadcast rate $\beta(G_d)$ of the given problem $G_d(V,E_d)$ is the minimum possible broadcast rate of all valid index codes over all alphabets $\Sigma$, i.e. $\beta(G_d)= \inf \limits_{\Sigma} \inf \limits_{\phi,\{\gamma_i \}} \beta_{\Sigma} (G_d,\phi,\{ \gamma_i \})$. 
         $\hfill \lozenge$ 
 \end{definition}

              If $t=1$, and the encoding function $\phi:\{0,1\}^n \rightarrow \{0,1\}^p$ is a concatenation of separable linear encoding functions, i.e. $\phi= \left(\phi_1,\phi_2 \ldots \phi_p \right)$ such that $\phi_j:\{0,1\}^n \rightarrow \{0,1\}$ is linear over the input bits $x_i$, then the index code is called a valid \textit{binary linear scalar} index code. The minimum broadcast rate over all binary linear scalar index codes for a given problem represented by digraph $G_d$ is denoted $\beta_2(G_d)$. $\beta_2(G_d)$ \cite{bar2006index}  was shown to be equal to the graph parameter $\mathrm{minrank}_2(G_d)$. It is clear from the definitions that $\beta(G_d) \leq \beta_2(G_d)$. 
            
 		 $\beta(G_d)$ is the minimum broadcast rate over all index codes (linear and non-linear) of the index coding problem given by the digraph $G_d$. There has been a series of information theoretic LPs, which was developed in \cite{blasiak2010index}, whose optima is used to bound the quantity $\beta(G_d)$ from above and below. The paper by Blasiak \textit{et al.} \cite{blasiak2010index} considers index coding problems where user requests could overlap, i.e. same packet can be requested by multiple users. The above definitions of minimum broadcast rate and decodability extend to the general case, except that the problem can be represented over a directed side information hypergraph, or equivalently, by a directed bi-partite side information graph \cite{tehrani2012bipartite}, with one partition for packets and another partition for receivers. In this work, we restrict our attention to the case of digraphs where users have distinct requests.

		We recall three bounds from \cite{blasiak2010index} that are relevant to this work for the general problem with possible overlaps between user requests. The first bound is $b_2(G)$. This is the information theoretic lower bound obtained by using only sub-modularity properties of the entropy function (in information theory parlance bounds using 'Fano's Inequality'). The second one is the \textit{fractional weak hyperclique cover}, denoted by $\psi_f$. The third is the fractional strong hyperclique cover, denoted by $\bar{\chi}_f$. We note that last two bounds are identical for the case when user requests are distinct or equivalently when the problem can be represented as a directed side information graph $G_d$. Hence, $\bar{\chi}_f(G_d)= \psi_f(G_d)$. We refer the reader to \cite{blasiak2010index} for the exact definitions of hyperclique covers for the general problem. 

Here, we define $\bar{\chi}_f(G_d)$ only for a directed graph $G_d$.  For this, we need the following definitions.

\begin{figure}
  \centering
 \includegraphics[width=8.5cm]{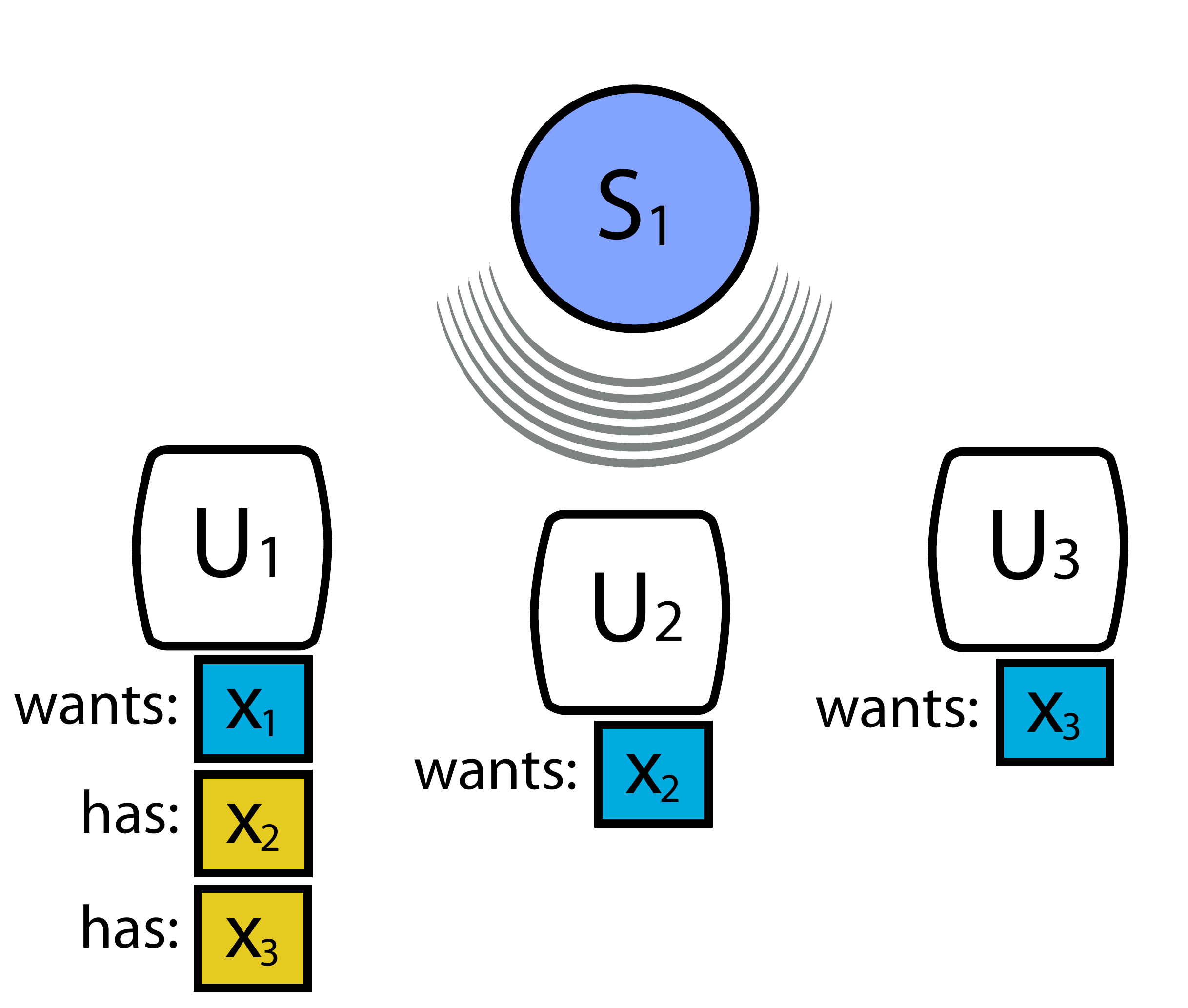}
  \caption{Index coding example: We have three users $U_1,U_2,U_3$ and a broadcasting base station $S_1$. Each user has some side information packets and requests a distinct packet from $S_1$.
The base station $S_1$ knows everything and can simultaneously broadcast to all three users noiselessly. 
User $U_i$ requests packet $x_i$. User $U_1$ has packets $x_2$ and $x_3$ as side information while users 
$U_2$ and $U_3$ have no side information. In this example three transmissions are required, so $\beta=3$.
}
\label{Graphdrawind1}
\end{figure}

\begin{figure}
  \centering
\includegraphics[width=11cm]{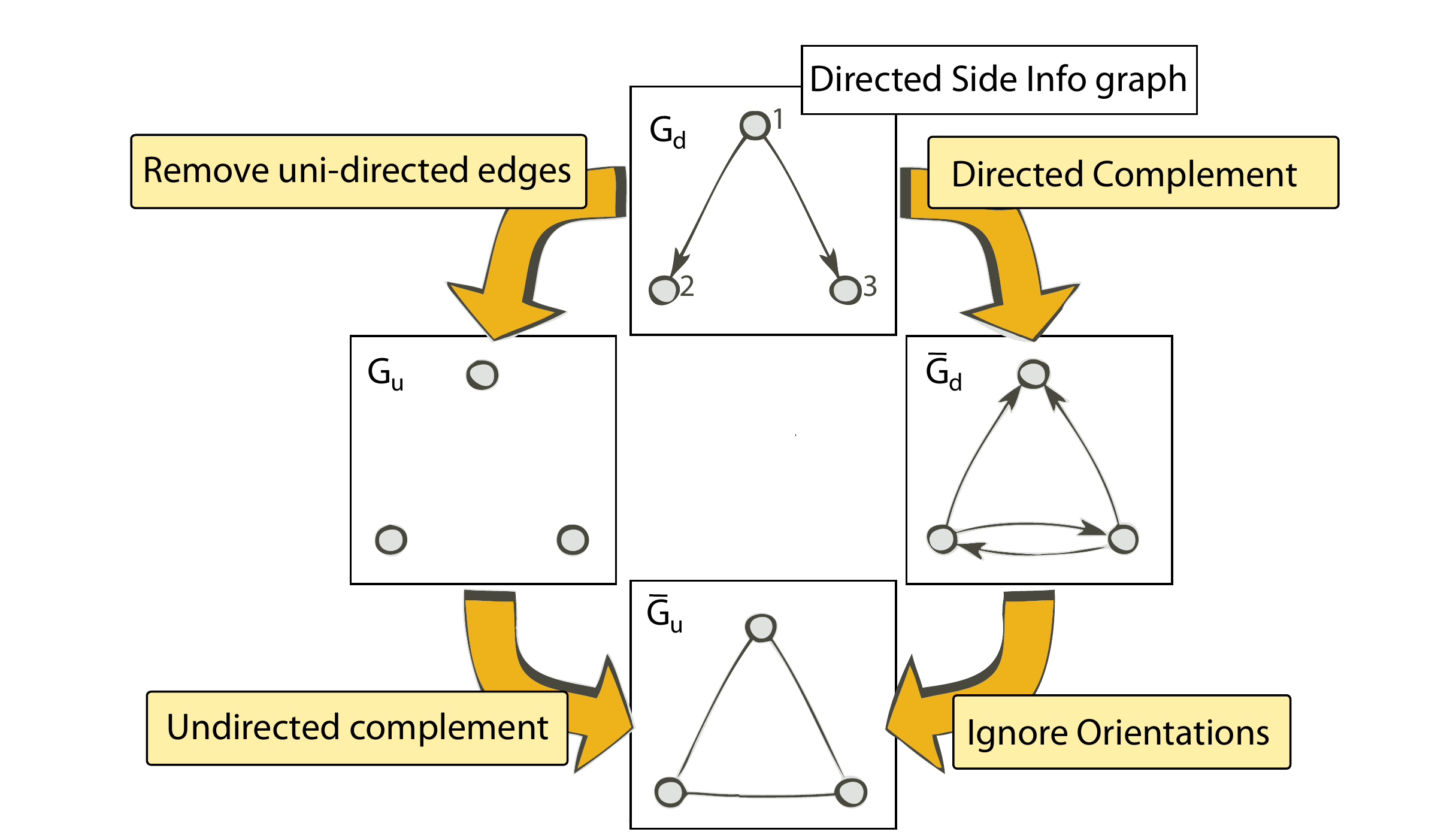}
  \caption{Index coding representation using the directed side information graph $G_d$. There are two alternate ways to reach $\bar{G}_u$. One through $G_u$, the underlying undirected side information graph. The other way is through $\bar{G}_d$, the interference graph. Clearly $\chi(\bar{G}_u)=3$.}
\label{Graphdrawind}
\end{figure}

      \begin{definition}
              (\textit{Interference graph}) The interference graph, denoted by $\bar{G_d}(V,\bar{E}_d)$ of an index coding problem is a \textit{ directed complement} of the directed side information graph $G_d$. For every vertex $i$, $(i,j) \in \bar{E}_d$ iff $(i,j) \notin E_d$. In other words, there is a directed edge from $i$ to $j$ in the directed complement (interference graph) if and only if it is not there in the directed side information graph.      
      $\hfill \lozenge$ 
\end{definition}  

      \begin{definition}
           (\textit{Underlying undirected side information graph}) Consider a directed side information graph $G_d(V,E_d)$. The \textit{underlying undirected side information graph}, denoted by $G_u(V,E_u)$, is the graph obtained by deleting uni-directed edges (i.e. $(i,j) \in E_d$ but $(j,i) \notin E_d$) and all remaining bi-directed edges are replaced by an undirected edge, denoted by $\{i,j\}$.
      $\hfill \lozenge$ \end{definition}

      We observe that the complement of the underlying undirected side information graph $G_u$ is the graph denoted by $\bar{G}_u(V,\bar{E}_u)$ which can alternatively be obtained by ignoring the orientation of the edges in the interference graph $\bar{G}_d$ (here, bi-directed edges in $\bar{G}_d$ can be replaced by a single undirected edge in $\bar{G}_u$). The various graphs associated with the index coding problem and relationships between them are illustrated in Fig. \ref{Graphdrawind}. The graphs in Fig. \ref{Graphdrawind} correspond to the index coding problem defined in Fig. \ref{Graphdrawind1}

     \begin{definition}
                 The fractional chromatic number of an undirected graph $G(V,E)$, denoted by $\chi_f(G)$, is given by the LP:
                       \begin{align}
                                     & \min \sum \limits_{I \in {\cal I}} x_I \nonumber\\
                                      \mathrm{s.t.} & \sum \limits_{I: v \in I} x_I \geq 1, ~\forall v \in V \nonumber\\
                                        \hfill & x_I \in \mathbb{R}^{+}, ~\forall I \in {\cal I}
                       \end {align}
    where ${\cal I}$ is the set of all independent sets in $G$. $\mathbb{R}^{+}$ is the set of non negative real numbers.
     $\hfill \lozenge$
 \end{definition}
    
     \begin{definition}
             $\bar{\chi}_f(G_d)$ is defined to be the fractional chromatic number, $\chi_f(\bar{G}_u)$, of the complement of the underlying undirected side information graph, denoted by $\bar{G}_u$.  
      $\hfill \lozenge$ \end{definition} 
     
      We note that the definition of $\bar{\chi}_f$ given here for directed side information graphs is equivalent to the definition of strong and weak hyper clique covers of \cite{blasiak2010index} restricted to directed graphs.

	It was shown \cite{blasiak2010index} that $\beta(G_d) \leq \bar{\chi}_f(G_d)$ and $\bar{\chi}_f(G_d)$ corresponds to an achievable binary vector coding solution to the problem. In general, $\mathrm{minrank}_2(G_d)$ and $\bar{\chi}_f(G_d)$ are incomparable for digraphs. There are examples where $\mathrm{minrank}_2(G_d) > \bar{\chi}_f(G_d)$ and vice versa.  In this sense, the fractional chromatic number is one of the best known bounds for the index coding problem.

      We consider two other graph parameters previously studied in \cite{korner2005local}\cite{simonyi2006local}, namely, the local chromatic number $\chi_\ell (\bar{G}_d)$ and the fractional local chromatic number $\chi_{f \ell}(\bar{G}_d)$ of the interference graph $\bar{G}_d$, which we show have corresponding achievable index coding schemes and hence are related to the index coding problem. We will see later that $\chi_{f \ell}(\bar{G}_d) \leq \bar{\chi}_f(G_d)$. We show that there are examples of digraphs where the inequalities are strict. Also, these local chromatic variants exploit the directionality of the directed side information graphs for the index coding problem in a way the other achievable schemes based on chromatic/clique cover numbers proposed before do not.
    
        \begin{definition}
                 (\textit{Local chromatic number:}) Denote the closed out-neighborhood (including the vertex) of a given vertex $i$ in a directed graph by $N^{+}(i)$, i.e. $j \in N^{+}(i)$ iff $(i,j)$ is a directed edge or $j=i$. The local chromatic number of the interference graph $\bar{G}_d$ (or equivalently local clique cover number of $G_d$), denoted by $\chi_{\ell}(\bar{G}_d)$, is the optimum for the following integer program:
                       \begin{align} \label{Eqn:localchromatic}
                                     & \min t \nonumber \\
                                      \mathrm{s.t.} & \sum \limits_{I: v \in I} x_I \geq 1, ~\forall v \in V  \nonumber \\
                                            \hfill &    \sum \limits_{I: N^{+}(v) \cap I \neq \emptyset} x_I \leq t  \nonumber \\
                                        \hfill & x_I \in \{ 0,1\}, ~\forall I \in {\cal I}
                       \end {align}
    where ${\cal I}$ is the set of all independent sets in the graph obtained by ignoring orientation of edges in $\bar{G}_d$, or equivalently, the independent sets in $\bar{G}_u$ (complement of the underlying undirected side information graph of $G_d$). 
         $\hfill \lozenge$ 
\end{definition}

		One can rephrase the local chromatic number in the terminology of proper coloring of a directed graph $G$. Let $c: V \rightarrow [k]$ be any proper coloring for the graph ignoring the orientation of the edges for some integer $k$. Let $\lvert c(N^{+}(i)) \rvert$ denote the number of colors in the closed out neighborhood of the directed graph taking the orientation into account. Then, 
		\[
		\chi_{\ell}(G)= \min \limits_{c} \max  \limits_{i \in V} \lvert c(N^{+}(i)) \rvert.
		\] 
In words, the local chromatic number of a directed graph $G$ is the maximum number of colors in any out-neighborhood minimized over all proper colorings of the undirected graph obtained from $G$ by ignoring the orientation of edges in $G$. An example is shown in Fig. \ref{local_example}.

	\begin{figure}
	  \centering
	 \includegraphics[width=8cm]{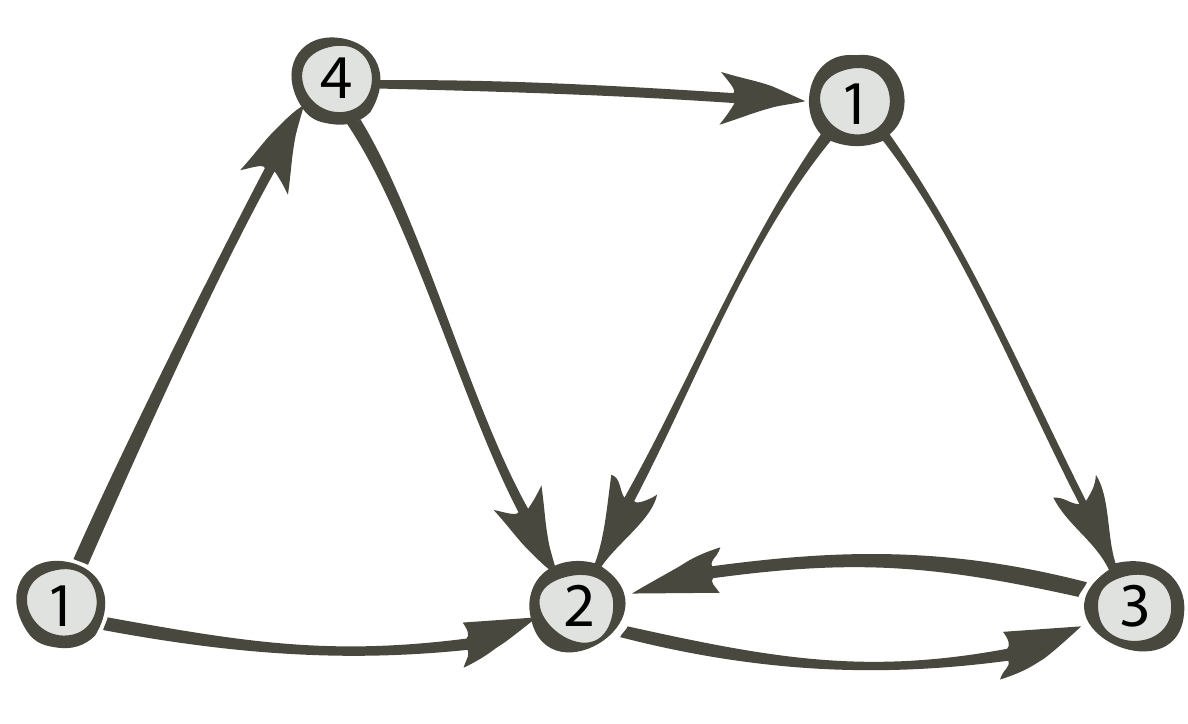}
	\caption{Example of the local chromatic number. The vertices of this directed graph have been 
	assigned the colors $\{1,2,3,4\}$. This is a proper coloring of the underlying undirected 
	graph. This coloring assignment corresponds to a local chromatic value of $3$, since the most colorful closed out-neighborhood has $3$ colors. For example, consider the unique vertex colored with color $4$: its out-neighborhood has the colors $\{1,2\}$ plus the color $4$ for the vertex itself. 
	The closed out-neighborhood of the vertex colored with color $3$ has only two colors. 
	The local chromatic number is computed by taking the minimum over proper colorings of the maximum number of colors in a closed out-neighborhood of a vertex. Note that in this graph there is 
	a proper coloring with only $3$ colors total. It is non-trivial to create graphs 
	where the local chromatic number is strictly smaller than the chromatic number.   }	
	\label{local_example}
	\end{figure}
	
      The optimum of the fractional version of the program (\ref{Eqn:localchromatic}), where $x_I$ are relaxed to be non negative reals numbers, is called the fractional local chromatic number of the interference graph, denoted by $\chi_{f \ell}(\bar{G}_d)$. In that case, the fractional local chromatic number is a result of an LP. The local chromatic number was initially defined in \cite{erdos1986coloring}. More recently, both fractional local chromatic number and the local chromatic number have been studied in \cite{korner2005local} in the context of bounds for Shannon and Sperner graph capacities.

  \section{Achievability for local chromatic numbers}
			  In this section, we show that $\chi_{\ell}(\bar{G}_d)$ and $\chi_{f \ell}(\bar{G}_d)$ correspond to linear and vector linear achievable schemes over higher fields. We have the following achievability results.

     \begin{theorem}\label{Lem:Localachiev}
                Given a directed side information graph $G_d$,
\[ 
  \beta(G_d) \leq \chi_{\ell}(\bar{G}_d).  
\]
    \end{theorem}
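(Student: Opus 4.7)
The plan is to convert a proper coloring of $\bar{G}_u$ that achieves the local chromatic number into a linear index code whose length equals $\chi_{\ell}(\bar{G}_d)$, using the ``coloring as alignment guide'' idea sketched in the Discussion of Techniques. Fix a proper coloring $c : V \to [p]$ of $\bar{G}_u$ that attains the local chromatic number, so $k := \max_{i \in V} |c(N^+_{\bar{G}_d}(i))| = \chi_{\ell}(\bar{G}_d)$, where $N^+_{\bar{G}_d}(i)$ is the closed out-neighborhood of $i$ in the interference graph $\bar{G}_d$ and $p$ is the total number of colors used.

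Next, I would instantiate a $(p,k)$ MDS code over a sufficiently large field $\FF_q$: that is, a set of $p$ vectors $\{w_1, \ldots, w_p\} \subset \FF_q^k$ in general position, so that any $k$ of them are linearly independent (a Reed--Solomon construction with $q \ge p$ suffices, and we further choose $q = 2^t$ so the alphabet is $\Sigma = \FF_q$). Assign to each vertex $i$ the vector $u_i := w_{c(i)}$, and define the encoding by
\[
\phi(x_1, \ldots, x_n) \;=\; \sum_{i=1}^n u_i \, x_i \;\in\; \FF_q^k.
\]
Because $\phi$ takes values in a $k$-dimensional space over $\FF_q$, the resulting broadcast length in bits is $k \log_2 q = k t$, giving a rate of $k = \chi_{\ell}(\bar{G}_d)$; the claimed bound will follow once decodability at every user is verified.

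For decoding, user $i$ subtracts from $\phi(x)$ the contribution of every $x_j$ with $j \in S(i)$ (its side information), leaving
\[
y_i \;=\; \sum_{j \in N^+_{\bar{G}_d}(i)} u_j \, x_j \;=\; \sum_{c \in c(N^+_{\bar{G}_d}(i))} w_c \Biggl(\; \sum_{\substack{j \in N^+_{\bar{G}_d}(i) \\ c(j) = c}} x_j \Biggr).
\]
The outer sum involves at most $k$ distinct color vectors $w_c$, which are linearly independent by the MDS property, so all inner coefficients can be recovered uniquely. Because $c$ is a proper coloring of $\bar{G}_u$, the color $c(i)$ differs from $c(j)$ for every $j \in N^+_{\bar{G}_d}(i) \setminus \{i\}$; hence the inner sum attached to $w_{c(i)}$ contains only the single term $x_i$, and user $i$ recovers its desired packet.

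The only genuine subtleties will be (i) ensuring the MDS code exists, which is why we enlarge the field, and (ii) the bookkeeping that the coefficient of the target message $x_i$ is isolated in the decoding step, which reduces to the observation that $c(i)$ is the unique color assigned to $i$ within its closed out-neighborhood in $\bar{G}_d$, a direct consequence of the coloring being proper with respect to the undirected graph $\bar{G}_u$. With these pieces in place, $(\phi, \{\gamma_i\})$ is a valid index code of rate $\chi_{\ell}(\bar{G}_d)$, yielding $\beta(G_d) \le \chi_{\ell}(\bar{G}_d)$.
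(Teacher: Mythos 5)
Your proof is correct and follows essentially the same approach as the paper: fix an optimal local coloring of $\bar{G}_u$, assign one column of an MDS (Reed--Solomon) generator matrix over a sufficiently large field to each color, and use the facts that the coloring is proper and that any $\chi_{\ell}(\bar{G}_d)$ MDS columns are linearly independent to isolate $x_i$ after side-information cancellation. Your color-grouping of the residual sum $y_i$ is a slightly more explicit bookkeeping of the same linear-independence argument the paper uses to conclude decodability.
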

      \begin{proof}
                As in the integer program $(\ref{Eqn:localchromatic})$, let ${\cal I}$ denote the family of independent sets in the undirected graph $\bar{G}_u$ or the graph obtained from $\bar{G}_d$ by ignoring the orientation of the edges. Coloring this graph, involves assigning $0$'s  and $1$'s to the independent sets in the graph. Let ${\cal J} \subseteq {\cal I}$ be the set of color classes in the optimal local coloring. Let $\chi_\ell(\bar{G}_d)$ be the local coloring number. Let $J:V \rightarrow {\cal J}$ be the coloring function. To each color class (independent set assigned $1$), we assign a column vector from $\mathbb{F}_q^m$ of a suitable length $m$ and over a suitable field $\mathbb{F}_q$ by a map $\mathbf{b}:{\cal J} \rightarrow \mathbb{F}_q^{m}$.  If the message desired by each user is from the finite field $\mathbb{F}_q$, i.e. $x_i \in \mathbb{F}_q,~\forall i \in V$, then we transmit the vector 
\[
\left[\mathbf{b}(J(1)),\mathbf{b}(J(2)) \ldots \mathbf{b}(J(n))\right] [x_1,x_2 \ldots x_n]^T.
\]
Clearly the length of the code is $m$ field symbols. If the index code is valid, then the broadcast rate is $m$. 

               We now exhibit a mapping $\mathbf{b}$ with $m=\chi_{\ell}(\bar{G}_d)$ and $q \geq \lvert \cal J \rvert$. Let the colors classes in ${\cal J}$ be ordered in some way. Consider the generator matrix $\mathbf{G}$ of a $(\lvert {\cal J} \rvert, \chi_{\ell}(\bar{G}_d) )$ MDS code over a suitable field $\mathbb{F}_q$ ,where $q \geq \lvert \cal J \rvert $. For instance, Reed Solomon code constructions could be used. Assign the different columns of $\mathbf{G}$ to each color class, i.e. $\mathbf{b}(j) = \mathbf{G}_j,~ \forall j \in {\cal J} $ where $\mathbf{G}_j$ is the j-th column. Under this mapping $\mathbf{b}$ and the previous description of the index code, it remains to be shown that this is a valid code. For any vertex $i$, the closed out-neighborhood $N^{+}(i)$ contains $\lvert J \left( N^{+} (i) \right) \rvert$ colors. Because, the coloring $J$ corresponds to the optimal local coloring, there are at most $m$ colors in any closed out neighborhood. Therefore,  $\lvert J \left(N^{+}(i) \right) \rvert \leq \chi_\ell(\bar{G}_d) = m$. 
	
		    Every vertex (user) $i$ must be able to decode its own packet $x_i$. User $i$ possesses packets $x_k$ as side information when $k$ is not in the closed out-neighborhood $N^{+}(i)$ in the interference graph $\bar{G}_d$. Hence, $\mathbf{b}(J(k)) x_k$ can be canceled for all $k \notin N^{+}(i)$. The only interfering messages for user $i$ are $\{\mathbf{b}(J(k)) x_k \}_{k \in N^{+}(i)-\{i \}}$. If we show that $b(J(i))$ is linearly independent from all $\{\mathbf{b}(J(k))\}_{k \in N^{+}(i)-\{i \}}$, then user $i$ would be able to decode the message $x_i$ from its \textit{interferers} in $N^{+}(i)- \{i\}$.

		 Since ${\cal J}$ represents a proper coloring over $\bar{G}_u$,  $\mathbf{b}(J(i))$ is different from $\mathbf{b}(J(k))$ for any $k \in N^{+}(i)- \{ i\}$. Also, any $m$ distinct vectors are linearly independent by the MDS property of the generator $\mathbf{G}$. Since, $\lvert J \left(N^{+}(i) \right) \rvert \leq \chi_\ell(\bar{G}_d)=m$, i.e. the number of colors in any closed out-neighborhood is at most $\chi_{\ell}$, the distinct vectors in any closed-out neighborhood are linearly independent . This implies that $b(J(i))$ is linearly independent from $\{\mathbf{b}(J(k))\}_{k \in N^{+}(i)-\{i \}}$.  Hence, every user $i$ would be able to decode the message it desires. Hence it is a valid index code and the broadcast rate is $\chi_{\ell}(\bar{G}_d)$. 	
      \end{proof}

In the previous construction we needed a field size at least equal to the total number of colors used 
in the local coloring. A natural question is if a similar idea can be applied when we restrict the problem to binary index codes. We show that adding a logarithmic additive overhead in the code length suffices to make the binary vectors to be in general position. This leads to the following corollary:

   \begin{corollary}
               Given a directed side information graph $G_d$, $\beta_2(G_d) \leq \chi_{\ell}(\bar{G}_d)+ 2 \log n$
   \end{corollary}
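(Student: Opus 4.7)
The plan is to keep the coloring-based framework of Theorem \ref{Lem:Localachiev} intact and only replace the deterministic MDS assignment over a large field by a random binary assignment. Let $k = \chi_{\ell}(\bar{G}_d)$ and fix an optimal local coloring $J: V \to {\cal J}$ exactly as in the proof of Theorem \ref{Lem:Localachiev}. Setting $m = k + 2 \lceil \log_2 n \rceil$, I would assign to each color class $j \in {\cal J}$ an independent uniformly random vector $\mathbf{b}(j) \in \mathbb{F}_2^{m}$. The transmitted codeword $\sum_i \mathbf{b}(J(i)) x_i$ is then a separable binary linear scalar index code of length $m$, so it suffices to show that with positive probability every user can decode.

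Following the same argument as in Theorem \ref{Lem:Localachiev}, decoding at user $i$ succeeds if and only if the (at most $k$) distinct vectors $\{\mathbf{b}(c): c \in J(N^{+}(i))\}$ are linearly independent over $\mathbb{F}_2$. A standard first-moment calculation shows that any fixed collection of at most $k$ i.i.d.\ uniform vectors in $\mathbb{F}_2^{m}$ is linearly dependent with probability at most $\sum_{j=0}^{k-1} 2^{j-m} < 2^{k-m}$: add the vectors one at a time and union-bound the event that the newly added vector lies in the span of those already placed. A union bound over the $n$ users then gives
\[
\Pr[\text{some user fails to decode}] \;<\; n \cdot 2^{k-m} \;\leq\; 1/n \;<\; 1,
\]
so a deterministic assignment $\mathbf{b}$ with the required property exists, yielding a binary linear scalar index code of length $m = \chi_{\ell}(\bar{G}_d) + 2 \lceil \log_2 n \rceil$.

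The conceptual point that makes the additive $2\log n$ overhead achievable is that the union bound ranges only over the $n$ closed out-neighborhoods, not over all $\binom{\lvert {\cal J} \rvert}{k}$ size-$k$ subsets of ${\cal J}$; a naive ``any $k$ columns in general position'' binary code would instead force a multiplicative overhead of order $k \log n$. Because of this, I do not anticipate a serious obstacle: the only thing to verify carefully is that the independence requirement imposed by the decoder of Theorem \ref{Lem:Localachiev} is genuinely local to each $N^{+}(i)$, which is already established in that proof. If a deterministic construction is desired, the method of conditional expectations applied to the above explicit failure bound produces one in polynomial time.
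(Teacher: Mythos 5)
Your proof is correct and follows essentially the same strategy as the paper: keep the optimal local coloring from Theorem \ref{Lem:Localachiev}, replace the MDS assignment with i.i.d.\ uniform random binary vectors of length $k + 2\log n$, and union-bound the decoding failure probability over the $n$ closed out-neighborhoods to obtain a failure probability below $1/n$. The only cosmetic difference is that you derive the per-neighborhood bound $2^{k-m} \leq 1/n^2$ directly via the standard chain argument, whereas the paper cites the equivalent rank estimate for random binary matrices from \cite{shokrollahi2006raptor}.
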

     \begin{proof}
            Consider the same coloring ${\cal J}$ which yields the minimum local coloring number for $\bar{G}_d$ as in proof of Theorem \ref{Lem:Localachiev}. From the proof of Theorem \ref{Lem:Localachiev}, it is enough to design a binary matrix, $\left[\mathbf{b}(J(1)),\mathbf{b}(J(2)) \right. $ $ \left. \ldots \mathbf{b}(J(n))\right]$ of dimensions $\left( \chi_{\ell} \left(\bar{G}_d \right)+ 2\log n \right)  \times n $ such that $b(J(i))$ is linearly independent from all $\{\mathbf{b}(J(k)):  $ $~ k \in N^{+}(i)-\{i \}\}$ over the binary field for every $i$. This would be a valid binary scalar linear code and hence from the definitions, the statement of the theorem follows.

	We show a randomized construction. First, we construct a random binary matrix $\mathbf{G}$ of dimensions $\left( \chi_{\ell}+2 \log n \right) \times \lvert {\cal J}\rvert$ with i.i.d Bernoulli variables with probability of $0$ being 1/2. Then, we assign the $j-$ th column of $\mathbf{G}$ to $\mathbf{b}(j), ~\forall j \in {\cal J} $ as before. Let $ \chi_{\ell} (\bar{G}_d)= p$. We analyze the probability that some specific $p$ columns have full rank. It is shown in \cite{shokrollahi2006raptor} that the probability that a $\left( \chi_{\ell}+2 \log n \right) \times \chi_\ell$ random i.i.d matrix ( probability of each entry in the random matrix being equal to $0$ is $1/2$), is not full rank is at most $2^{-2 \log n} =  1/n^2$. 

\begin{align}
    \mathrm{Pr} \left( \exists i: ~ \mathbf{b}\left( J(i) \right) \mathrm{linearly~ dependent~ on} \{\mathbf{b}(J(k))\}_{k \in N^{+}(i)-\{i \}}   \right) & \leq n \mathrm{~Pr} \left( \mathbf{b}(1) \mathrm{~ linearly~ dependent~ on}  \right. \nonumber \\
         &  \left.  \left[ \mathbf{b}(2) \ldots \mathbf{b}(p)\right]   \right) \\
     \hfill  & \leq n \mathrm{~Pr} \left( \mathrm{a~random} \left( \chi_{\ell}+2 \log n \right) \times \chi_\ell  \right. \nonumber \\
               & \left. \mathrm{~matrix ~is~ not ~ full ~rank}    \right) \\ 
        \hfill             &=  n \frac{1}{n^2} = \frac{1}{n}                  
\end{align}
     The first inequality is because: 1) every vertex has at most $p-1$ colors in the neighborhood 2) the column vector assigned to color $i$, i.e. $\mathbf{b}(i)$, is generated independently and identically and 3) the union bound is applied. The last inequality follows from the bound for the probability that a random matrix is not full rank in the preceding discussion.

Hence, the probability that this random construction results in an invalid index code is $1/n$. Hence, a valid binary index code of length $\chi_{\ell}(\bar{G}_d)+ 2 \log n$ exists with very high probability.
    \end{proof}

The above corollary shows that, if field size is binary, with some extra length of $2 \log n$ over and above $\chi_{\ell}(\bar{G}_d)$ in Theorem \ref{Lem:Localachiev}, we can find a good index code.
      \begin{theorem}
                Given a directed side information graph $G_d$, 
\[
\beta(G_d) \leq \chi_{f \ell}(\bar{G}_d).
\]
\end{theorem}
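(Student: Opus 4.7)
The plan is to extend the MDS construction of Theorem~\ref{Lem:Localachiev} to a vector linear code of blocklength $N$, chosen so that an optimal rational solution of the LP defining $\chi_{f\ell}(\bar{G}_d)$ becomes integer-valued. Concretely, I would fix an optimal basic (hence rational) LP solution $\{x_I^*\}$ attaining value $t^* = \chi_{f\ell}(\bar{G}_d)$ and pick $N \in \mathbb{Z}^{+}$ so that $y_I := N x_I^*$ and $r := N t^*$ are all integers. Each independent set $I$ of $\bar{G}_u$ is then replicated into $y_I$ ``color slots,'' yielding a multiset $\mathcal{S}$ of total size $M := \sum_I y_I$. The LP constraints translate to: every vertex $v$ belongs to at least $N$ slots, and for every vertex $v$ the number of slots whose underlying $I$ meets $N^{+}(v)$ in $\bar{G}_d$ is at most $r$.

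Next, I would fix a field $\mathbb{F}_q$ with $q \geq M$ and the generator matrix $\Gm$ of an $(M,r)$ MDS code (e.g., Reed--Solomon), assigning a distinct column of $\Gm$ to each slot. For every vertex $v$, select $N$ distinct slots that contain $v$ (possible by the covering bound above), and let $\Bm_v$ be the $r \times N$ matrix whose columns are the assigned MDS vectors. Each user's packet is a vector $\xv_v \in \mathbb{F}_q^N$, and the broadcast is $\sum_v \Bm_v \xv_v \in \mathbb{F}_q^r$.

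For decoding, user $v$ cancels $\Bm_k \xv_k$ for every $k \notin N^{+}(v)$ using side information, leaving $\sum_{k \in N^{+}(v)} \Bm_k \xv_k$. Two observations then close the argument. First, if $k \in N^{+}(v)-\{v\}$, then $(v,k) \in \bar{E}_d$, so $\{v,k\}$ is an edge of $\bar{G}_u$, and no independent set of $\bar{G}_u$ contains both vertices; consequently the slots underlying $\Bm_v$ are disjoint from those underlying $\Bm_k$, so all columns under consideration are \emph{distinct} MDS columns. Second, every slot used by any $\Bm_k$ with $k \in N^{+}(v)$ is labeled by an $I$ intersecting $N^{+}(v)$, so the LP constraint caps the total number of such distinct slots by $r$. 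The MDS property then forces these $\leq r$ distinct columns to be linearly independent, so $\xv_v$ is uniquely recoverable.

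The resulting code has blocklength $r = Nt^*$ field symbols and conveys $N$ field symbols per user, giving broadcast rate $r/N = t^* = \chi_{f\ell}(\bar{G}_d)$, which yields $\beta(G_d) \leq \chi_{f\ell}(\bar{G}_d)$. The main delicacy I expect is the bookkeeping around multiplicities: independent sets enter the LP with fractional weights, so a single $I$ generates several slots, and one vertex may draw several slots tagged by the same $I$. Observation~(i), which identifies ``non-adjacency in $\bar{G}_u$'' as exactly the condition for two vertices to be able to share a slot, is what reconciles these multiplicities with the directed structure of $\bar{G}_d$, and is the step that will require the most careful statement.
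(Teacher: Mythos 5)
Your proposal matches the paper's proof essentially step for step: scale a rational optimal LP solution to integers, expand each independent set into that many ``color slots,'' assign distinct columns of an MDS generator matrix to the slots, give each vertex a matrix formed from the columns of slots containing it, and decode using the locality constraint to guarantee at most $r$ distinct columns appear across any closed out-neighborhood. The only cosmetic difference is that the paper removes slack in the covering constraints (via a shrink-the-independent-set operation) so each vertex lies in exactly the right number of slots, while you simply select a subset of size $N$ for each vertex; both are correct.
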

      \begin{proof}
                	We combine the arguments of \cite{blasiak2010index} for the achievability result for the fractional chromatic number and the above arguments of using MDS codes for the local chromatic number to produce an achievable index coding scheme for the fractional local chromatic number.  For the interference graph, $\bar{G}_d$, let $\chi_{f \ell}(\bar{G}_d) = t$ for the relaxed version of the program (\ref{Eqn:localchromatic}). This implies :
         \begin{equation}                     
      x_I \geq 0,~x_I \in \mathbb{R}^{+},~ \forall I \in {\cal I},~ \mathrm{and}~~\sum \limits_{I \cap N^{+}(v)\ne \phi} x_I \leq t, ~ \forall v \in V ~\mathrm{and}~~ \sum \limits_{I:v \in I} x_I \geq 1, ~ \forall v \in V.
           \end{equation}
       Here, we recall that ${\cal I}$ is a collection of all independent sets in the undirected graph, denoted $\bar{G}_u$, obtained from $\bar{G}_d$ ignoring the orientations of the edges. Since the fractional version of problem (\ref{Eqn:localchromatic}) has only integer coefficients, all $x_I$ are rational. Let $r$ be the least common multiple of all the denominators of the rationals $x_I$. Then we can redefine: $y_I = r x_I$. The equations become: 
        \begin{equation}                     
      \sum \limits_{I \cap N^{+}(v)\ne \phi} y_I \leq rt=s, ~ \forall v \in V ~\mathrm{and}~~ \sum \limits_{I:v \in I} y_I \geq r, ~ \forall v \in V.
           \end{equation}
      Let $p = \sum \limits_{I} y_I$. As $y_I$ is integral, we will call $y_I$ the number of 'colors' assigned to $I$. Since a subset of an independent set is independent, one can make sure all the inequalities $\sum \limits_{I:v \in I} y_I \geq r$ are equalities by  carrying out the following operation, repeatedly for any vertex $v$ for which $\sum \limits_{I:v \in I} y_I > r $,  by choosing a suitable $I$: Choose the subset $J = I - \{v\}$ and increase $y_J$ by 1 and decrease $y_I$ by 1 if $y_I>0$. This operation would not affect any of the inequalities including the locality constraints. For the locality constraints, the colors in any closed out-neighborhood will only reduce or remain the same due to this operation as colors for a bigger independent set are assigned to a smaller independent set contained in it.  

 After these operations, if $y_I >1$, one may consider a \textit{collection} of independent sets ${\cal I}$ where independent sets are repeated but each $y_I =1$. Until now, the arguments remain similar to that in \cite{blasiak2010index} with the exception of taking note of the newly added locality constraints.

      We have a sequence of $p$ independent sets (possibly repeated) and every vertex is in $r$ of them and the out-neighborhood sees at most $s$ of them. Clearly, $s \geq r$. Like in the proof of Theorem \ref{Lem:Localachiev}, we generate a $(p,s)$ MDS code over a field of size greater than $p$. To every independent set in the possibly repeated sequence of $p$ sets, we assign one column of the generator matrix. Let ${\cal I}(v)$ denote the independent sets (possible repeated) that contain the vertex $v$. Under the assignment of columns to the independent sets (possibly repeated), all columns assigned to independent sets in ${\cal I}(v)$ are collected in the matrix $\mathbf{B}(v)$. Since every vertex is a part of exactly $r$ independent sets (possible repeated), $\mathbf{B}(v)$ is an $s \times r$ matrix. Since $r \leq s$, the columns of $\mathbf{B}(v)$ are linearly independent by the MDS property. Since the number of independent sets intersecting $N^{+}(v)$ is at most $s$, for any vertex $v$, columns of  $\mathbf{B}(v)$ are linearly independent from the columns assigned to neighborhood vertices, given in a concatenated form as $\left[ \mathbf{B}(u_1)~\mathbf{B}(u_2) \ldots \mathbf{B}(u_k) \right]$ where $\{u_1~u_2 \ldots u_k\} = N^{+}(v) -\{v\}$. 

	Let each message $\mathbf{x}_i \in \mathbb{F}_q^{r \times 1}$ be a vector of $r$ field symbols where $q \geq p$. The index code is given by: $\left[\mathbf{B}(1)~ \mathbf{B}(2) \ldots \mathbf{B}(n) \right] \left[\left(\mathbf{x}_1\right)^T ~\left(\mathbf{x}_2\right)^T \ldots \left(\mathbf{x}_n\right)^T   \right]^T$. By the MDS property of the code used and by previous arguments, this code is a valid index code. The broadcast rate is $s/r = t = \chi_{f \ell}(\bar{G}_d)$. 
       \end{proof}

\section{Multiplicative Gap between fractional and local chromatic variants}
          From the definition of the fractional local chromatic number, due to additional locality constraints, it is clear that for any directed graph $G_d$, $\chi_{f \ell}(\bar{G}_d) \leq \chi_f (\bar{G}_u) = \bar{\chi}_f(G_d)$. In other words, the fractional chromatic number uses just the information contained in the bi-directed edges of the side information graph. Through locality, the fractional local chromatic number exploits the directionality of certain uni-directed edges which are neglected by the fractional chromatic number. If the problem is over a bi-directed side information graph $G_d$ ($i$ has $x_j$ iff $j$ has packet $x_i$) or an undirected side information graph $G$, then, from results in \cite{korner2005local}, we have the following lemma:

\begin{lemma}
      \cite{korner2005local} $ \chi_{f \ell}(\bar{G}) = \chi_f(\bar{G}) \leq \chi_{\ell}(\bar{G}) \leq \chi (\bar{G})$
\end{lemma}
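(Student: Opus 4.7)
My plan is to establish the chain of three (in)equalities by peeling off the easy ones first and isolating the real content, namely the equality $\chi_{f\ell}(\bar{G}) = \chi_f(\bar{G})$ for undirected $G$.

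First I would dispose of the rightmost inequality $\chi_{\ell}(\bar{G}) \leq \chi(\bar{G})$. This is immediate from the definitions: take any proper coloring of $\bar{G}$ using $\chi(\bar{G})$ colors. The number of distinct colors in any closed (out-)neighborhood is at most the total number of colors used, so this coloring witnesses $\chi_{\ell}(\bar{G}) \leq \chi(\bar{G})$. Next, the middle inequality $\chi_f(\bar{G}) \leq \chi_{\ell}(\bar{G})$ reduces to two facts: (i) $\chi_{f\ell}(\bar{G}) \leq \chi_{\ell}(\bar{G})$ because $\chi_{f\ell}$ is precisely the LP relaxation of the integer program \eqref{Eqn:localchromatic} that defines $\chi_{\ell}$, so any feasible $\{0,1\}$-solution is feasible for the LP; and (ii) $\chi_{f\ell}(\bar{G}) = \chi_f(\bar{G})$, which is the main claim of the lemma.

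The bulk of the argument is therefore the equality $\chi_{f\ell}(\bar{G}) = \chi_f(\bar{G})$. The direction $\chi_{f\ell}(\bar{G}) \leq \chi_f(\bar{G})$ is easy and works for any digraph: take an optimal fractional coloring $\{x_I\}$ of $\bar{G}$ with $\sum_I x_I = \chi_f(\bar{G})$. Feed the same $\{x_I\}$ into the fractional local LP with $t := \sum_I x_I$. The cover constraints $\sum_{I \ni v} x_I \geq 1$ hold by hypothesis, and the locality constraints hold trivially, since $\sum_{I: I \cap N^+(v) \neq \emptyset} x_I \leq \sum_I x_I = \chi_f(\bar{G})$.

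The reverse inequality $\chi_f(\bar{G}) \leq \chi_{f\ell}(\bar{G})$ is the hard part, and it is precisely where the undirected hypothesis is used; this is the result we invoke from \cite{korner2005local}. The intuition I would use to motivate it is that for undirected $\bar{G}$, the closed out-neighborhood coincides with the closed (symmetric) neighborhood, so the locality constraint has a self-dual flavor: $I \cap N^+(v) \neq \emptyset$ iff $v \in N^+(I) := \bigcup_{u \in I} N^+(u)$. Taking the LP dual of the fractional local chromatic program, one obtains a maximization over pairs $(y_v, z_v) \geq 0$ with $\sum_v z_v \leq 1$ and $\sum_{v \in I} y_v \leq \sum_{v \in N^+(I)} z_v$ for every independent set $I$; the symmetry between $I$ and $N^+(I)$ in the undirected case allows one to average/rebalance an optimal dual solution into a feasible solution of the dual of the fractional chromatic LP (which maximizes $\sum_v y_v$ subject only to $\sum_{v \in I} y_v \leq 1$), giving the matching bound. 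The main obstacle, as expected, is carrying out this rebalancing cleanly; in the interest of brevity I would simply cite \cite{korner2005local} for this step, since the lemma is stated as a known result from that paper.
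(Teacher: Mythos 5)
The paper states this lemma as a known result and cites \cite{korner2005local} without reproducing a proof, so there is no internal proof to compare against; your proposal is therefore supplementary content rather than a reconstruction. Your decomposition is sound: $\chi_{\ell}(\bar G)\le\chi(\bar G)$ is immediate from any minimum proper coloring, $\chi_{f\ell}(\bar G)\le\chi_{\ell}(\bar G)$ is the standard LP-relaxation inequality, and your argument for $\chi_{f\ell}(\bar G)\le\chi_f(\bar G)$ (feed an optimal fractional coloring into the local LP with $t=\sum_I x_I$, and observe the locality constraints are trivially satisfied) is correct and, as you note, holds for arbitrary digraphs. You also correctly isolate $\chi_f(\bar G)\le\chi_{f\ell}(\bar G)$ for undirected $\bar G$ as the only substantive claim, and defer it to \cite{korner2005local}, which matches what the paper does.

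One caveat worth flagging: the LP-duality ``rebalancing'' intuition you offer is not how \cite{korner2005local} actually proves the hard direction, and as stated it has a gap. Your dual constraint $\sum_{v\in I}y_v\le\sum_{v\in N^+(I)}z_v$ is derived correctly, but $N^+(I)$ is generally far from independent, so the ``symmetry'' between $I$ and $N^+(I)$ does not by itself supply a rebalancing that turns an optimal $(y,z)$ into a feasible $\tilde y$ for the fractional clique LP with the same objective; making this rigorous would essentially require reproving the theorem. The reference instead works through $r$-fold local colorings and homomorphisms into the universal graphs $U_d(r,m,k)$ --- the very machinery this paper reuses in its Theorems~\ref{localbndtheorem} and~\ref{fraclocalbndthm} --- together with vertex-transitivity of the universal graphs, which is what lets one pin down $\chi_f$ exactly. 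Since you explicitly present the duality argument as motivation and cite the reference for the actual step, this does not make your proposal wrong, but a reader should not come away thinking the duality sketch is a proof outline of the cited result.
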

         This means for a problem on a bi-directed side information graph, the fractional local chromatic number is identical to the fractional chromatic number. 
%
 Therefore, we consider digraphs which have some uni-directed edges. We give an example of a digraph (which is not bi-directed) over $n$ edges where the difference between fractional and local chromatic number is $\Theta(n)$. Consider a complete undirected graph on $n$ vertices. Number the vertices from $1$ to $n$. Further, label the vertices 'odd' or 'even' depending on the number assigned to the vertices. Now, we construct a directed graph from the complete graph by orienting every edge in exactly one direction or the other. Consider two vertices numbered odd, i.e. $2p+1$ and $2q+1$. If $p <q$, there is an edge directed from vertex numbered $2p+1$ to the vertex numbered $2q+1$. Consider two vertices numbered $2p$ and $2q+1$. If $2p > 2q+1$, then there is a directed edge from vertex numbered $2p$ to the vertex numbered $2q+1$. If $2p < 2q+1$, then there is a directed edge from vertex numbered $2q+1$ to the vertex numbered $2p$.  Consider two vertices numbered even, i.e. $2p$ and $2q$. If $p <q$, there is a directed edge from vertex numbered $2p$ to the vertex numbered $2q$.  Consider any vertex numbered odd. It has edges directed outwards towards all odd vertices bigger than itself and all even vertices smaller than itself in the numbering. Similarly, consider any vertex numbered even. It has edges directed outwards towards all odd vertices smaller than itself and all even vertices bigger than itself.  Therefore, the out-degree  of every vertex is bounded by $\frac{n}{2}+1$. Hence, the local chromatic number of the directed version is at most $\frac{n}{2} +1 $ and the fractional chromatic number is $n$.

We mention that this example is a special case of a more natural generalization of a relation between the local and chromatic number for any graph and its orientations given in Proposition $4$ in \cite{simonyi2010directed}.  Hence, the achievable index coding schemes proposed here based on fractional local chromatic number is strictly better in terms of broadcast rate than the ones proposed in \cite{blasiak2010index}. We consider the question of multiplicative gap. We came to know of a parallel work in progress \cite{Simonyipersonal} that has established a tighter upper bound of $e$ for the ratio $\chi_f(\bar{G}_d)/\chi_{f\ell} (\bar{G}_d)$.  In this work , using results regarding graph homomorphisms, we prove that the ratios $\chi_f(\bar{G}_d)/\chi_\ell(\bar{G}_d)$ and $\chi_f(\bar{G}_d)/\chi_{f \ell}(\bar{G}_d)$ are upper bounded by a constant. This result is obtained in parallel to \cite{Simonyipersonal}.

\subsection{Universal upper bound for $\chi_f(\bar{G}_d)/\chi_\ell(\bar{G}_d)$}
 We upper bound the ratio between the fractional chromatic number $\chi_f(\bar{G}_u)=\chi_f(\bar{G}_d)$ and the local chromatic number $\chi_\ell(\bar{G}_d)$ by $\frac{5}{4}\text{e}^2$ for any directed graph $\bar{G}_d$.

\begin{theorem}\label{localbndtheorem}
    $\chi_f(\bar{G}_d)/ \chi_\ell (\bar{G}_d) \leq \frac{5}{4}\text{e}^2$  
$\hfill \square$ \end{theorem}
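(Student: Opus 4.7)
My plan is to convert any optimal local coloring of $\bar{G}_u$ into an explicit fractional coloring whose weight is bounded by a universal constant times $\chi_\ell(\bar{G}_d)$. Let $k = \chi_\ell(\bar{G}_d)$ and fix a proper coloring $\phi\colon V \to [n]$ of $\bar{G}_u$ with $|\phi(N^+(v))| \le k$ for every $v$. Write $c_v = \phi(v)$ and $A_v = \phi(N^+(v)\setminus\{v\}) \subseteq [n]\setminus\{c_v\}$, so that $|A_v| \le k-1$.

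The key construction is to sample a random color subset. For each $S \subseteq [n]$ of fixed size $s$ (to be tuned), put
\[
I_S \;=\; \{\,v \in V : c_v \in S,\ A_v \cap S = \emptyset\,\}.
\]
If $u,v \in I_S$ were adjacent in $\bar{G}_u$, then without loss of generality $(u,v) \in E(\bar{G}_d)$ and $v \in N^+(u)\setminus\{u\}$, forcing $c_v \in A_u \cap S = \emptyset$, a contradiction; hence every $I_S$ is independent in $\bar{G}_u$. Assign the uniform weight $w_S = 1/\binom{n-k}{s-1}$ to each size-$s$ subset. A vertex $v$ lies in $I_S$ iff $S \ni c_v$ and $S \cap A_v = \emptyset$; since $|A_v| \le k-1$ the number of such $S$ is at least $\binom{n-k}{s-1}$, so the total weight covering $v$ is at least $1$. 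Summing gives the explicit bound
\[
\chi_f(\bar{G}_d) \;\le\; \frac{\binom{n}{s}}{\binom{n-k}{s-1}} \;=\; \frac{n}{s}\prod_{i=0}^{k-2}\frac{n-1-i}{n-s-i}.
\]
This is exactly the fractional coloring one obtains by composing the natural homomorphism of $\bar{G}_d$ into the directed ``universal'' graph with vertex set $\{(c,A): c\in[n],\ A\subseteq[n]\setminus\{c\},\ |A|=k-1\}$ and arcs $(c,A)\to(c',A')$ iff $c'\in A$, with a uniform fractional coloring of that universal graph; this is the graph-homomorphism viewpoint hinted at in the introduction.

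It remains to choose $s$ and extract the explicit constant. Bounding each factor $(n-1-i)/(n-s-i) = 1 + (s-1)/(n-s-i)$ by $\exp\bigl((s-1)/(n-s-i)\bigr)$ and summing, the right-hand side becomes $(n/s)\,\exp\bigl((s-1)\sum_{i=0}^{k-2}1/(n-s-i)\bigr)$. Taking $s \approx n/k$ makes the prefactor $n/s$ close to $k$ and the exponent close to $1$, giving an asymptotic ratio $\chi_f/\chi_\ell \to e$. For a bound uniform in $(n,k)$ with $n \ge k$, I would pick $s = \lceil n/k \rceil$, handle the small-$n$ range $k \le n \le Ck$ separately via the trivial estimate $\chi_f \le n$, and carry out a careful but routine accounting of the boundary terms to arrive at the explicit constant $\tfrac{5}{4}e^2$.

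\paragraph{Main obstacle.}
The qualitative statement $\chi_f = O(\chi_\ell)$ is essentially immediate from the construction of $I_S$ and the cover bound. The delicate point is pinning down the multiplicative constant: the probabilistic estimate is asymptotically tight at constant $e$ (the value achieved in the parallel work \cite{Simonyipersonal}), so the factor $\tfrac{5e}{4}$ of slack in the stated theorem is what the closed-form bound $\binom{n}{s}/\binom{n-k}{s-1}$, together with the crude inequalities $|A_v|\le k-1$ and $1+x\le e^x$, costs us in order to cover all regimes of $n/k$ simultaneously.
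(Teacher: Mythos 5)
Your construction is the right one, and it is in fact the same combinatorial object the paper works with, just presented in unwrapped form: the map $v \mapsto (c_v, A_v)$ is a homomorphism from $\bar{G}_d$ into $U_d(n,k)$, the universal graph of Körner--Pilotto--Simonyi, and your weighted family $\{I_S\}$ of independent sets is exactly a fractional coloring of $U(n,k)$ pulled back along that homomorphism. The paper states this step via the theorem that a digraph with $\chi_\ell = k$ maps homomorphically into some $U_d(m,k)$ with $\chi_\ell(U_d(m,k)) = k$, then uses vertex-transitivity to write $\chi_f(U(m,k)) = |V|/\alpha$ and computes $\alpha(U(m,k)) = \max_p p\binom{m-p}{k-1}$ by analyzing maximal independent sets; this is identical to your $\binom{n}{s}/\binom{n-k}{s-1}$ optimized over $s$, with $m \leftrightarrow n$ and $p \leftrightarrow s$. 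So the construction part of your proposal is correct and is more self-contained than the paper's (you avoid citing the homomorphism theorem and the vertex-transitivity fact).

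The gap is that you stop at the technical crux. You state that picking $s = \lceil n/k\rceil$, handling the range $n \leq Ck$ separately, and carrying out ``a careful but routine accounting of the boundary terms'' yields $\tfrac{5}{4}e^2$, but that accounting is precisely where the paper does its real work. The paper replaces $m$ by $m' = \lfloor m/k\rfloor k$, observes that the ratio $\binom{m-1}{k-1}/\binom{m-p}{k-1}$ is monotone in $m$, sets $p = m'/k$, and then proves a separate lemma (Lemma~\ref{esqlemma}) that $\frac{m-i}{m(1-1/k)-i} \leq \frac{1}{(1-1/k)^2}$ for $1\le i\le k$, $m\ge 4k$, $k\ge 2$, which, combined with the factor $(1+k/m') \le 5/4$ from the rounding, gives $\tfrac54 (1-1/k)^{-2(k-1)} \le \tfrac54 e^2$; the regime $m < 4k$ is handled with $p=1$. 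Your route via $1+x \le e^x$ is a different per-factor inequality and would require its own bookkeeping (it is not obvious it even produces the same constant --- a quick pass suggests it may give something like $\max(4, e^{4/3})$, which would be a \emph{better} bound, but that only underscores that the estimate has not actually been done). As written, the proposal establishes $\chi_f = O(\chi_\ell)$ but does not prove the stated constant; the numerical part is not routine, and deferring it leaves the theorem unproven.
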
 

	Before proving Theorem \ref{localbndtheorem}, we would like to recall notations and present a few technical lemmas. We always work on the interference graph $\bar{G}_d$ which is the directed complement of the directed side information graph $G_d$. $G_u$ is obtained by throwing away uni-directed edges $(i,j) \in E_d,~(j,i) \notin E_d$ and replacing bi-directed edges $(i,j),(j,i) \in E_d$ by an undirected edge $\{ i,j\}$. $\bar{G}_u$ is the complement of $G_u$. Alternatively, $\bar{G}_u$ is obtained by ignoring orientations of the directed edges in the interference graph $\bar{G}_d$. If there is a bi-directed edge in $\bar{G}_d$, only one undirected edge is used to replace them in $\bar{G}_u$.  
	
	From discussions in the previous sections, $\bar{\chi}_f(G_d)=\bar{\chi}_f(G_u)=\chi_f(\bar{G}_u)=\chi_f(\bar{G}_d)$. We intend to upper bound the ratio $\chi_f(\bar{G}_d)/ \chi_\ell (\bar{G}_d)$ for any $G_d$.  We use ideas of graph homomorphisms and universal graphs already defined in \cite{korner2005local}.
       Let us review those concepts. A directed graph $G_d$ is homomorphic to another directed graph $H_d$ if there is a function  $f: V(G_d) \rightarrow V(H_d)$ such that if $(i,j) \in E(G_d)$ , then $(f(i),f(j)) \in E(H_d)$. In other words, the directed edges are preserved under the mapping $f$.  Similarly, an undirected graph $G$ is homomorphic to another undirected graph $H$ if there is a function $f: V(G) \rightarrow V(H)$ such that $\{i,j\} \in E(G)$ implies $\{f(i),f(j)\} \in E(H)$. 

     Let $G_u$ be the undirected graph obtained by ignoring orientation of the directed edges in $G_d$ (and replacing any bi-directed edge by a single undirected edge). Let $H_u$ be obtained similarly from $H_d$. Then, 
  \begin{lemma}\label{undhomo}
              If $G_d$ is homomorphic to $H_d$, then $G_u$ is homomorphic to $H_u$. 
 \end{lemma}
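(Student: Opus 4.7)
My plan is to show that the very same vertex map $f: V(G_d) \to V(H_d)$ that witnesses the directed homomorphism also witnesses the undirected one. Since $V(G_u) = V(G_d)$ and $V(H_u) = V(H_d)$ as sets, this costs nothing at the level of vertices; the only content to verify is that $f$ preserves undirected edges.

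First, I will fix an arbitrary edge $\{i,j\} \in E(G_u)$. By the construction of $G_u$ from $G_d$ (ignoring orientation, with bi-directed pairs collapsed to a single undirected edge), the existence of $\{i,j\} \in E(G_u)$ forces at least one of $(i,j)$ or $(j,i)$ to lie in $E(G_d)$. I will then split into two (non-exclusive) cases. If $(i,j) \in E(G_d)$, then the directed homomorphism property gives $(f(i),f(j)) \in E(H_d)$, and applying the same ``ignore orientation'' operation to $H_d$ yields $\{f(i),f(j)\} \in E(H_u)$. Symmetrically, if $(j,i) \in E(G_d)$, then $(f(j),f(i)) \in E(H_d)$, and again this produces the undirected edge $\{f(i),f(j)\} = \{f(j),f(i)\}$ in $E(H_u)$. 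Either way, $f$ maps edges of $G_u$ to edges of $H_u$, establishing the required undirected homomorphism.

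There is no real obstacle here: the proof is essentially a definition chase, and the key observation is that the rule ``ignore orientation, then collapse bi-directed edges'' commutes with applying an edge-preserving map $f$. The only mild subtlety worth mentioning is the degenerate case $f(i) = f(j)$ for some $\{i,j\} \in E(G_u)$: since $H_d$ is assumed simple (no self-loops), the directed homomorphism hypothesis already rules this out whenever $(i,j) \in E(G_d)$, so $f(i) \neq f(j)$ on edges and the induced undirected edge $\{f(i),f(j)\}$ is genuine rather than a self-loop. With that observation, the argument is complete.
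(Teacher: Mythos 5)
Your proof is correct and follows essentially the same route as the paper's: use the same vertex map $f$, observe that an undirected edge $\{i,j\}\in E(G_u)$ comes from $(i,j)$ or $(j,i)$ in $E(G_d)$, push each case through the directed homomorphism, and ignore orientation on the other side. The extra remark about $f(i)\neq f(j)$ is a fine bit of diligence but not something the paper bothers to spell out.
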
    
 \begin{proof}
        If $\{i,j \} \in G_u$, then either $(i,j) \in E(G_d)$ or $(j,i) \in E(G_d)$ or both. This implies either $(f(i),f(j)) \in E(H_d)$ or $(f(j),f(i)) \in E(H_d)$. This means that $ \{ f(i),f(j) \} \in E(H_u)$.  Hence $G_u$ is homomorphic to $H_u$. 
\end{proof}

 Let us recall the definitions of  the universal graph $U_d(m,k)$ from \cite{korner2005local}. Let $[m]$ denote $\{1,2,3 \ldots m\}$.
\begin{definition} 
     $V\left( U_d(m,k) \right) = \{(x,A): x \in [m], A \subseteq [m], \lvert A \rvert= k-1, x \notin A \} $.
    $E \left(U_d (m,k) \right) = \{ \left( (x,A) , (y,B) \right): y \in A \}$. In other words, there is a directed edge from $(x,A)$ to $(y,B)$ if $y \in A$.
$\hfill \lozenge$
 \end{definition}
 
With some different notation from \cite{korner2005local} , let us define the undirected graph $U(m,k)$ obtained by disregarding the orientation of the edges in $U_d(m,k)$. It is defined as:

  \begin{definition} 
     $V\left( U (m,k) \right) = \{(x,A): x \in [m], A \subseteq [m], \lvert A \rvert= k-1, x \notin A \} $.
    $E \left(U (m,k) \right) = \{ \left( (x,A) , (y,B) \right): y \in A ~\mathrm{or}~ x \in B \}$. In other words, there is an undirected edge between $(x,A)$ and $(y,B)$ if $y \in A$ or $x \in B$ or both.
$\hfill \lozenge$ 
\end{definition}

It has been shown in \cite{korner2005local} that if $\chi_\ell ( \bar{G}_d ) = k$, then there is an integer $m \geq k$ such that $\bar{G}_d$ is homomorphic to $U_d(m,k)$ and $\chi_\ell \left( U_d(m,k) \right) = k$. Since $\bar{G}_d$ is homomorphic to $U_d(m,k)$, from Lemma \ref{undhomo}, $\bar{G}_u$ is homomorphic to $U(m,k)$. It is known from \cite{godsil2001algebraic}, that if undirected $G$ is homomorphic to undirected $H$, then $\chi_f(G) \leq \chi_f(H)$. Then clearly, $\chi_f(\bar{G}_d)/\chi_\ell(\bar{G_d}) \leq \chi_f (U(m,k))/k$ for all $\bar{G}_d:\chi_\ell(\bar{G}_d) =k$. We will show that $\chi_f(U(m,k))/k$ is upper bounded by $\text{e}^2$ for all parameters. 

\begin{proof}[Proof of Theorem \ref{localbndtheorem}]
We show that $\chi_f \left( U(m,k) \right)/k \leq \frac{5}{4}\text{e}^2, k \geq 2$.  It is easy to observe that $U(m,k)$ is a vertex transitive graph. For vertex transitive graphs, it is known that $\chi_f \left ( U(m,k) \right) = \frac{ \lvert V \left( U(m,k) \right)\rvert } { \alpha \left( U (m,k)\right) }$, where $\alpha (.)$ is the independence number of the graph. We analyze the structure of maximal independent sets of the graph $U(m,k)$. Let $I=\{ (x_1,A_1), (x_2,A_2) \ldots (x_n,A_n) \}$ be an independent set. Let  $X = \{x_1,x_2 \ldots x_n \} $ and $A = \bigcup \limits_{i} A_i$. Then clearly, from the definition of $U(m,k)$, $A \cap X = \emptyset$, otherwise there is an edge between two vertices in the set $I$. If $X \cup A \neq [m]$, then one can add elements $(x_i,A_i)$ to $I$ s.t. $x_i \in X$ and $A_i \in A' = [m]- X$. Let $\lvert A' \rvert= m-p$ and $\lvert X \rvert= p$. Choose the set $I'$ containing all possible vertices $(x,B)$ such that $x \in X $ and $B \subseteq A',~ \lvert B \rvert = k-1$. $I'$ is independent and $I \subseteq I'$. $\lvert I'\rvert= p {m-p \choose k-1 }$. 

    Hence, every independent set is contained in a suitable independent set $I'$ parametrized by $p$. Therefore, $\alpha\left( U(m,k) \right) = \max \limits_{1 \leq p \leq m-k+1} p {m-p \choose k-1}$. $\lvert V \left( U(m,k) \right) \rvert = m {m-1 \choose k-1}$.

Let $m'=\lfloor m/k\rfloor * k$. Clearly, $m \leq m'+k$. If $m \geq 4k$, then $m' \geq 4k$. For $m \geq 4k$, we have:
 
  \begin{align}\label{eq1}   
       \chi_f \left( U(m,k) \right)/k  & =  \min \limits_{1 \leq p \leq m-k+1} \frac{m {m-1 \choose k-1}}{ k p {m-p \choose k-1}} \\ \label{eq1.1}
                                              & \leq    \min \limits_{1 \leq p \leq m-k+1} \frac{\left(m'+k \right) {m-1 \choose k-1}}{ k p {m-p \choose k-1}}  \\ \label{eq1.2}
                                              &  \leq    \min \limits_{1 \leq p \leq m-k+1} \frac{\left(m'+k \right) {m'-1 \choose k-1}}{ k p {m'-p \choose k-1}}  \\  \label{eq1.3} 
                                              &  \leq   \frac{\left(m'+k \right) {m'-1 \choose k-1}}{ k p {m'-p \choose k-1}} \rvert_{p = m'/k} \\
                                            & \leq   \left(1+\frac{k}{m'}\right)  \frac{ (m'-1) (m'-2) \ldots (m'-k+1)} { \left( m' \left(1-1/k\right) \right) \left( m' \left(1-1/k\right) -1 \right) \left( m' \left(1-1/k\right) -2 \right) \ldots  \left( m' \left(1-1/k\right) - k+2 \right)  } \\ \label{eq1.4}
                                             & \leq  \frac{5}{4} \frac{m'-k+1}{\left( m' \left(1-1/k\right) \right)}  \frac{ (m'-1) (m'-2) \ldots (m'-(k-2))} { \left( m' \left(1-1/k\right) -1 \right) \left( m' \left(1-1/k\right) -2 \right) \ldots  \left( m' \left(1-1/k\right) - (k-2) \right)  } \\ \label{eq2}
                                              & \leq \frac{5}{4}   \frac{1}{\left(1-1/k\right)}  \frac{ (m'-1) (m'-2) \ldots (m'-(k-2))} { \left( m' \left(1-1/k\right) -1 \right) \left( m' \left(1-1/k\right) -2 \right) \ldots  \left( m' \left(1-1/k\right) - (k-2) \right)  } \\ \label{eq3}
                                           & \leq   \frac{5}{4} \frac {1}{\left(1-1/k\right)}  \frac{1}{\left(1-1/k \right)^{2(k-2)}} \\   
                                          & \leq   \frac{5}{4} \frac {1}{\left(1-1/k \right)^{2(k-1)}} \\ 
        	                                     & \leq \frac{5}{4} \text{e}^2   \label{eq4}
  \end{align} 
     (\ref{eq1.1}) is because $m \leq m'+k$. (\ref{eq1.2}) is because $\frac{{m-1 \choose k-1}}{ {m-p \choose k-1}}$ is a decreasing function with respect to the variable $m$. To see that, $\frac{{m-1 \choose k-1}}{ {m-p \choose k-1}} = \prod \limits_{i=1}^{k-1} \frac{m-i}{m-(p-1)-i}$. But, $\frac{m-i}{m-(p-1)-i} = 1+\frac{p-1}{m-(p-1)-i}$ which is decreasing in $m$. (\ref{eq1.3}) is because substituting any value for $p$ gives an upper bound. Here, we choose $p =m'/k$. (\ref{eq2}) is because $m-k+1 \leq m,~ \forall k \geq 1$. (\ref{eq3}) is valid under the assumption that $m/k \geq 4, k \geq 2$ and Lemma \ref{esqlemma} given below. When $m \geq 4k$, $m' \geq 4k$. Therefore, (\ref{eq1.4}) is valid.  (\ref{eq4}) is because $\left( 1+\frac{1}{k-1} \right)^{k-1} \leq \text{e},~\forall k \geq 1$. We now prove Lemma \ref{esqlemma} needed to justify (\ref{eq3}) below.

 If $m/k \leq 4$ then taking $p=1$ in (\ref{eq1}), we can show that the ratio is upper bounded by $m/k \leq 4$.   
\end{proof}
We note that $k \geq 2$ is not a restriction, since, $\chi_\ell \geq 2$ for a digraph with at least one directed edge .

\begin{lemma} \label{esqlemma}
$  \frac{m-i}{m \left(1-1/k \right)- i } \leq \frac{1}{\left (1-1/k \right)^2}, ~ \forall 1 \leq i \leq k, ~ m \geq 4k,~ k \geq 2$.
\end{lemma}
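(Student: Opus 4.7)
The plan is to clear the denominator and reduce the claim to an elementary linear inequality in $m$, $i$, and $k$. I would start by introducing the abbreviation $a = 1 - 1/k \in (0,1)$, so the desired inequality becomes $\frac{m-i}{am - i} \leq \frac{1}{a^2}$. Before cross-multiplying, I need to confirm that the denominator is strictly positive: using $i \leq k$ and $m \geq 4k$, we have $am - i \geq 4k(1 - 1/k) - k = 3k - 4$, which is positive for $k \geq 2$. So the cross-multiplication preserves the sense of the inequality.

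Next I would expand and factor. The inequality $a^2(m-i) \leq am - i$ rearranges to $am(1-a) - i(1-a^2) \geq 0$, and pulling out $(1-a)$ gives
\[
(1-a)\bigl[am - i(1+a)\bigr] \geq 0.
\]
Since $1 - a = 1/k > 0$, the claim is equivalent to $am \geq i(1+a)$, i.e., $m(1 - 1/k) \geq i(2 - 1/k)$. This is the core linear inequality.

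Finally, I would verify this last inequality directly from the hypotheses. The left side satisfies $m(1 - 1/k) \geq 4k(1 - 1/k) = 4(k-1)$, while the right side satisfies $i(2 - 1/k) \leq k(2 - 1/k) = 2k - 1$. So it suffices that $4k - 4 \geq 2k - 1$, i.e., $2k \geq 3$, which holds for all $k \geq 2$. The only real subtlety is the denominator-positivity check; beyond that, the proof is a clean factorization followed by plugging in the worst-case values $i = k$ and $m = 4k$.
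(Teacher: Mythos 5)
Your proof is correct and follows essentially the same strategy as the paper: clear the denominator and reduce the claim to the linear inequality $m(1-1/k) \geq i(2-1/k)$, then verify it from the hypotheses $i \leq k$, $m \geq 4k$, $k \geq 2$. The paper phrases the reduction as $i(x+y) \leq xy$ with $x = m$, $y = m(1-1/k)$, which simplifies to the identical condition; your version is if anything slightly more careful, since you explicitly check that the denominator $m(1-1/k)-i$ is positive before cross-multiplying.
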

 \begin{proof}
     The following equivalence hold for $x \geq y$: 
            \begin{equation}
                \frac{x-i}{y-i} \leq \frac{x^2}{y^2} \Leftrightarrow i (x+y) \leq xy
            \end{equation}
 It is sufficient to show that $i \leq xy/(x+y)$ when $i \leq k$, $x=m,~y=m \left(1-1/k \right)$ . This means it is enough to show $ k \leq m \frac{\left(1-1/k \right)}{2}$. Since $k \geq 2$, $(1-1/k) \geq 1/2$, for $ m \geq 4k$,  $ k \leq m \frac{\left(1-1/k \right)}{2}$ holds. Hence, the statement in the lemma is proved.  
\end{proof}
 We provide a numerical example where the ratio is larger than $2.5$. Consider $U_d (281,9)$. Computer calculations show that $\chi_f\left(U_d ( 289,9)\right)/ \chi_{\ell}\left( U_d ( 289 ,9 )\right) = 2.5244$. 

In the next subsection, we extend this framework to bounding the ratio between fractional chromatic and the fractional local chromatic number.

\subsection{Universal upper bound for $\chi_f(\bar{G}_d)/\chi_{f \ell}(\bar{G}_d)$}
       In this subsection, we prove the following result about multiplicative gap between the fractional chromatic and fractional local chromatic number.
\begin{theorem}\label{fraclocalbndthm}
   $\chi_f(\bar{G}_d)/\chi_{f\ell} (\bar{G}_d)  \leq \frac{5}{4}\text{e}^2$
$\hfill \square$ \end{theorem}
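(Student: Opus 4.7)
The plan is to reduce the fractional-local bound to the integer-local bound already established in Theorem \ref{localbndtheorem}, by passing through the lexicographic product of $\bar{G}_d$ with a bi-directed clique. Since the program (\ref{Eqn:localchromatic}) has integer coefficients, its LP relaxation attains a rational optimum $\chi_{f\ell}(\bar{G}_d)=s/r$. Clearing denominators exactly as in the achievability proof for $\chi_{f\ell}$ earlier in this paper, the LP solution becomes an $r$-fold proper coloring $c\colon V\to\binom{[p]}{r}$ of $\bar{G}_u$ satisfying $\bigl|\bigcup_{u\in N^+(v)} c(u)\bigr|\leq s$ for every $v\in V$, where $N^+(v)$ denotes the closed out-neighborhood in $\bar{G}_d$.

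Form the directed lexicographic product $\bar{G}_d[K_r]$ with vertex set $V\times[r]$: the edges are $(v,i)\to(u,j)$ whenever $v\to u$ in $\bar{G}_d$ (with $v\ne u$), together with a bi-directed clique on each fiber $\{v\}\times[r]$. Two facts need to be checked. (i) $\chi_\ell(\bar{G}_d[K_r])\leq s$: assigning to $(v,i)$ the $i$-th element of $c(v)$ in some fixed ordering yields a proper coloring of the underlying undirected graph $(\bar{G}_d[K_r])_u$, since disjointness of $c(u)$ and $c(v)$ for $\{u,v\}\in E(\bar{G}_u)$ handles between-fiber edges while $|c(v)|=r$ handles within-fiber edges, and the closed out-neighborhood of any $(v,i)$ in $\bar{G}_d[K_r]$ uses precisely the colors in $\bigcup_{u\in N^+(v)}c(u)$, hence at most $s$ colors. (ii) $\chi_f\bigl((\bar{G}_d[K_r])_u\bigr)=r\,\chi_f(\bar{G}_d)$: a direct unfolding of the definitions shows $(\bar{G}_d[K_r])_u=\bar{G}_u[K_r]$, and the multiplicativity of the fractional chromatic number over the lexicographic product yields $\chi_f(\bar{G}_u[K_r])=\chi_f(\bar{G}_u)\cdot\chi_f(K_r)=r\,\chi_f(\bar{G}_d)$.

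Applying Theorem \ref{localbndtheorem} to the directed graph $\bar{G}_d[K_r]$ then gives
\[
r\,\chi_f(\bar{G}_d)=\chi_f\bigl((\bar{G}_d[K_r])_u\bigr)\leq \tfrac{5}{4}\text{e}^2\,\chi_\ell(\bar{G}_d[K_r])\leq \tfrac{5}{4}\text{e}^2\,s=\tfrac{5}{4}\text{e}^2\,r\,\chi_{f\ell}(\bar{G}_d),
\]
and dividing by $r$ finishes the proof. The main non-trivial ingredient beyond the earlier development is the multiplicativity identity $\chi_f(G[K_r])=r\,\chi_f(G)$ for undirected $G$, which is a classical theorem (see, e.g., Scheinerman and Ullman's \emph{Fractional Graph Theory}); everything else is the LP-to-$r$-fold-coloring translation already used in the achievability proof for $\chi_{f\ell}$, plus a routine identification of the underlying undirected graph of the lexicographic product.
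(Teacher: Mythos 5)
Your proof is correct, and it takes a genuinely different route from the paper's. The paper's own argument stays within the universal-graph machinery: it invokes the characterization $\chi_{f\ell}(\bar{G}_d)=\inf_r \chi_\ell(\bar{G}_d,r)/r$ from K\"orner \emph{et al.}, defines a new universal digraph $U_d(r,m,k)$ for $r$-fold colorings, re-proves the homomorphism lemma $\bar{G}_d\to U_d(r,m,k)$, and then computes the independence number of $U(r,m,k)$ from scratch, observing that the $\binom{k}{r}$ factors cancel so that the final ratio reduces to exactly the quantity already bounded in Theorem~\ref{localbndtheorem}. You instead apply Theorem~\ref{localbndtheorem} as a black box to the lexicographic blow-up $\bar{G}_d[K_r]$: the $r$-fold local coloring extracted from the LP solution becomes an ordinary local coloring of the blown-up graph with at most $s$ colors in any closed out-neighborhood, and the classical identity $\chi_f(G[K_r])=r\,\chi_f(G)$ for the (undirected) lexicographic product converts the resulting bound $r\,\chi_f(\bar{G}_d)\le \frac{5}{4}\mathrm{e}^2 s$ into the desired one after dividing by $r$. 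Your route is shorter and more modular, making it transparent that the fractional statement is a formal consequence of the integer one plus a blow-up; it does, however, import the external multiplicativity theorem for $\chi_f$ under lexicographic product, whereas the paper's proof is self-contained modulo the $r$-fold universal-graph facts it is already citing from \cite{korner2005local}. Both proofs are ultimately driven by the same bound on $m\binom{m-1}{k-1}\big/\bigl(k\max_p p\binom{m-p}{k-1}\bigr)$; you reach it without re-deriving the independence structure of $U(r,m,k)$.
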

	Since we work on the interference graph, $\bar{G}_d$ will be used throughout. The proof of the theorem follows the same recipe as the proof of the bound for the ratio between fractional chromatic and the local chromatic number, i.e. the ratio $\chi_{f}\left(\bar{G}_d \right)/ \chi_{l} \left(\bar{G}_d\right)$. Apart from the $LP$ characterization of the fractional local chromatic number, there is another characterization using $r- $fold local colorings of a directed graph $\bar{G}_d$ \cite{korner2005local}. We will review the results regarding $r-$ fold colorings from \cite{korner2005local}.

\begin{definition}
        (\textit{r-fold local coloring number of a digraph}) A proper $r$-fold coloring of $\bar{G}_d$ is a coloring of $\bar{G}_u$ (the undirected graph obtained by removing edge orientations in $\bar{G}_d$) where $r$ distinct colors are assigned to each vertex such that color set of two adjacent vertices are disjoint. The $r$-fold local coloring number of a graph, denoted by $\chi_\ell(\bar{G}_d,r)$, is less than or equal to $k$ if there is a proper $r$-fold coloring of $\bar{G}_d$ such that the total number of distinct colors in the closed out-neighborhood of any vertex is at most $k$. 
$\hfill \lozenge$ 
\end{definition}
   
$\chi_\ell \left( \bar{G}_d,r \right)$ is the minimum possible maximum number of colors in any closed out neighborhood over all possible proper $r$-fold colorings of graph $\bar{G}_d$. It is known that \cite{korner2005local}: 
   \begin{equation}
          \chi_{f\ell} (\bar{G}_d)= \inf \limits_{r} \chi_\ell(\bar{G}_d,r)/r   
  \end{equation}
 Let us define a directed universal graph $U_d(r,m,k)$ as follows: 
 
\begin{definition}      
     $V\left( U_d(r,m,k) \right) = \{(X,A): x \subseteq [m],~ A \subseteq [m],~ \lvert A \rvert= k-r,~ \lvert X \rvert = r,~ X \cap A = \emptyset \} $.
    $E \left(U_d (r,m,k) \right) = \{ \left( (X,A) , (Y,B) \right): Y \subseteq A \}$. In other words, there is a directed edge from $(X,A)$ to $(Y,B)$ if $Y \subseteq A$.
$\hfill \lozenge$
 \end{definition}

Consider the undirected graph $U(r,m,k)$ obtained by ignoring the orientation of the directed edges in $U_d(r,m,k)$ and bi-directed edges replaced by a single undirected edge.  It is formally defined as:

 \begin{definition}      
     $V\left( U(r,m,k) \right) = \{(X,A): x \subseteq [m],~ A \subseteq [m],~ \lvert A \rvert= k-r,~ \lvert X \rvert = r,~  X \cap A = \emptyset \} $.
    $E \left(U(r,m,k) \right) = \{ \left( (X,A) , (Y,B) \right): Y \subseteq A \ \mathrm {or}\  X \subseteq B  \}$. In other words, there is a directed edge from $(X,A)$ to $(Y,B)$ if $Y \subseteq A$ or $X \subseteq A$ or both.
$\hfill \lozenge$
 \end{definition}

Replacing all undirected edges by directed edges and out-neighborhoods by closed out-neighborhoods in the argument of Lemma 3 in \cite{korner2005local}, we have the following lemma:
 \begin{lemma} \label{homolem}
           Given $\bar{G}_d$, if $\chi_\ell \left(\bar{G}_d,r\right) \leq k$ then there is an $m$ such that $\bar{G}_d$ is homomorphic to $U_d(r,m,k)$. 
 \end{lemma}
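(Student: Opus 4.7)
The plan is to construct the homomorphism $f\colon V(\bar G_d)\to V(U_d(r,m,k))$ directly from the $r$-fold local coloring. Fix a proper $r$-fold coloring $c$ of $\bar G_d$ witnessing $\chi_\ell(\bar G_d,r)\leq k$, and let $m$ be the total number of colors it uses (if needed, I will later enlarge $m$ by adding dummy colors that are not assigned to any vertex; this cannot break the coloring conditions). For each vertex $v$, set $X_v=c(v)$, which by hypothesis is an $r$-subset of $[m]$, and let $C(v)=\bigcup_{u\in N^+(v)} c(u)$ be the set of colors appearing in the closed out-neighborhood. The local coloring condition gives $|C(v)|\leq k$, and since $v\in N^+(v)$ we have $X_v\subseteq C(v)$, so $|C(v)\setminus X_v|\leq k-r$.

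Next I define $A_v$ to be any $(k-r)$-subset of $[m]\setminus X_v$ that contains $C(v)\setminus X_v$; such an $A_v$ exists as long as $m$ is at least $k$, which can be arranged by the dummy-color padding. Put $f(v)=(X_v,A_v)$. By construction $|X_v|=r$, $|A_v|=k-r$, and $X_v\cap A_v=\emptyset$, so $f(v)\in V(U_d(r,m,k))$.

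The key verification is that $f$ preserves directed edges. Take $(u,v)\in E(\bar G_d)$, i.e.\ $v\in N^+(u)$. Then $X_v=c(v)\subseteq C(u)$ because $v$ lies in the closed out-neighborhood of $u$. Moreover $u$ and $v$ are adjacent in $\bar G_u$, so the proper $r$-fold coloring condition forces $c(u)\cap c(v)=\emptyset$, i.e.\ $X_v\cap X_u=\emptyset$. Combining these two facts yields
\[
X_v\;\subseteq\; C(u)\setminus X_u\;\subseteq\; A_u,
\]
which is exactly the defining condition for $(f(u),f(v))$ to be a directed edge of $U_d(r,m,k)$.

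I expect the only subtle step to be the packaging of the construction: making sure $A_v$ has the exact prescribed size $k-r$ (resolved by the dummy-color padding) and confirming that our use of \emph{closed} out-neighborhoods correctly mirrors the original Lemma~3 of \cite{korner2005local}, where the universal graph and neighborhood notions are defined for the non-fold setting. Everything else is a direct translation of that argument to the $r$-fold version, with $X_v$ playing the role of the single color assigned in the $r=1$ case and $A_v$ playing the role of the remaining $k-r$ colors in the closed out-neighborhood.
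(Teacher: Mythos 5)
Your construction is correct and is exactly what the paper is gesturing at: the paper itself only says to adapt the argument from Lemma~3 of K\"{o}rner \emph{et al.}, and your explicit map $v\mapsto(X_v,A_v)$ with $X_v=c(v)$ and $A_v$ a $(k-r)$-subset of $[m]\setminus X_v$ covering $C(v)\setminus X_v$ is precisely that adaptation to the directed, $r$-fold, closed-out-neighborhood setting. The edge-preservation check ($X_v\subseteq C(u)\setminus X_u\subseteq A_u$, hence $Y\subseteq A$ in $U_d(r,m,k)$) and the padding to ensure $m\ge k$ are both sound, so this spells out the step the paper leaves implicit.
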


This means, in the undirected sense $\bar{G}_u$ is homomorphic to $U(r,m,k)$, by Lemma \ref{undhomo}, if $\bar{G}_d$ is homomorphic to $U_d(r,m,k)$ in the directed sense.  Also, $\chi_f(\bar{G}_d) \leq \chi_f \left( U(r,m,k) \right)$ as fractional  chromatic numbers cannot decrease under homomorphism.

Consider $\bar{G}_d$ such that $\chi_{\ell}(\bar{G}_d,r)= k$ for some $r$ and $k$. By Lemma \ref{homolem}, $\bar{G}_d$ is homomorphic to $U_d(r,m,k)$ for some $m$. Hence, $\chi_f(\bar{G}_d)=\chi_f(\bar{G}_u) \leq \chi_f \left( U (r,m,k) \right)$. Here, $k \geq 2$ as it is a $r$-fold local chromatic number of a directed graph with one directed edge.  Now, we have the following lemma: 

\begin{lemma}
   $\chi_f( \bar{G}_d )/ \frac{k}{r} \leq \chi_f \left( U (r,m,k) \right)/ \frac{k}{r} \leq \frac{5}{4}\text{e}^2$.
\end{lemma}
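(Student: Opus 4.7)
The plan is to follow the same recipe as in the proof of Theorem \ref{localbndtheorem}. The first inequality is immediate: by Lemma \ref{homolem}, $\bar{G}_d$ is homomorphic to $U_d(r,m,k)$, and Lemma \ref{undhomo} promotes this to a homomorphism $\bar{G}_u \to U(r,m,k)$; combined with the monotonicity of $\chi_f$ under homomorphism and the identity $\chi_f(\bar{G}_d) = \chi_f(\bar{G}_u)$, this handles the left-hand inequality.

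For the main bound, I would first observe that $U(r,m,k)$ is vertex-transitive under the natural action of the symmetric group on $[m]$ lifted coordinate-wise to the pairs $(X,A)$, since this action preserves $|X|$, $|A|$, the disjointness $X\cap A=\emptyset$, and the adjacency relation ``$Y\subseteq A$ or $X\subseteq B$''. Hence $\chi_f(U(r,m,k))=|V(U(r,m,k))|/\alpha(U(r,m,k))$ with $|V(U(r,m,k))|=\binom{m}{r}\binom{m-r}{k-r}$. To lower bound $\alpha$, for each $T\subseteq[m]$ of size $t$ I would employ the independent set
\[
I_T=\{(X,A)\in V(U(r,m,k)):X\cap T\neq\emptyset\text{ and }A\cap T=\emptyset\}.
\]
Its independence follows because any $v\in X_2\cap T$ must lie outside $A_1$ (which is disjoint from $T$), forcing $X_2\not\subseteq A_1$; the other direction is symmetric. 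A direct count gives $|I_T|=\binom{m-t}{k-r}\bigl[\binom{m-k+r}{r}-\binom{m-k+r-t}{r}\bigr]$.

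The remaining task is to choose $t$ and bound $(r/k)\cdot|V(U(r,m,k))|/|I_T|$. Setting $t=\lfloor cm\rfloor$, a Stirling expansion shows that this ratio is asymptotic to $(r/k)/\bigl[(1-c)^{k-r}(1-(1-c)^r)\bigr]$, which is minimized by the non-obvious choice $(1-c)^r=(k-r)/k$ (specializing to $c=1/k$ in the $r=1$ case of Theorem \ref{localbndtheorem}). At this optimum the bound collapses to $(1+r/(k-r))^{(k-r)/r}$, which is uniformly bounded by $\text{e}$ for $1\leq r\leq k-1$.

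To upgrade this asymptotic computation into the explicit constant $\tfrac{5}{4}\text{e}^2$, I would mimic the non-asymptotic estimates from Theorem \ref{localbndtheorem}: round $m$ up to some $m'$ where the chosen $t$ is integral (at the cost of a bounded multiplicative factor), telescope each of the two binomial-coefficient ratios into products that are controlled by a generalization of Lemma \ref{esqlemma}, and handle the small-$m$ regime by direct evaluation at $t=1$. The main obstacle compared with the $r=1$ analysis is that \emph{both} factors of $|I_T|$ depend nontrivially on $t$, so two products of binomial ratios must be telescoped simultaneously; correctly identifying the exponent structure in the second factor $\binom{m-k+r}{r}-\binom{m-k+r-t}{r}$ is what forces the optimum $(1-c)^r=(k-r)/k$ rather than the naive guess $c=r/k$.
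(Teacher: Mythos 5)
Your overall architecture matches the paper (homomorphism for the left inequality, vertex transitivity plus an explicit independent set for the right one), but you diverge at the crucial step and that divergence leaves a real gap. The paper's independent set is the \emph{restricted} version of yours: it takes $(X,A)$ with $A\cap T=\emptyset$ and $|X\cap T|=1$ (not merely $X\cap T\neq\emptyset$), which gives $|I|=t\binom{m-t}{k-1}\binom{k-1}{r-1}$. The whole point of that restriction is algebraic: with $|V|=\binom{m}{k}\binom{k}{r}$, the identities $\binom{k}{r}\frac{r}{k}=\binom{k-1}{r-1}$ and $\binom{m}{k}=\frac{m}{k}\binom{m-1}{k-1}$ make the ratio
\[
\frac{r}{k}\cdot\frac{|V|}{|I|}=\frac{m\binom{m-1}{k-1}}{k\,t\binom{m-t}{k-1}},
\]
which is \emph{literally} expression (\ref{eq1}) from Theorem~\ref{localbndtheorem}, with all $r$-dependence cancelled. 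The paper then cites the already-proved non-asymptotic bound $\frac{5}{4}\mathrm{e}^2$ with zero new estimation work. Your $I_T$ is a proper superset of this $I$, so it is a valid independent set and would, if carried through, yield an \emph{even better} constant (your asymptotic computation correctly gives $\bigl(1+\tfrac{r}{k-r}\bigr)^{(k-r)/r}<\mathrm{e}$ at the stationary point $(1-c)^r=(k-r)/r\cdot\tfrac{r}{k}$, wait, $(1-c)^r=(k-r)/k$). But precisely because $|I_T|$ no longer factors into something matching the $r=1$ formula, you cannot inherit the telescoping estimates of Theorem~\ref{localbndtheorem}; you have to redo the rounding of $m$ to a convenient multiple, the choice of an integral $t$, a two-product analogue of Lemma~\ref{esqlemma}, and the small-$m$ endpoint case, and your write-up explicitly flags this as an unresolved obstacle rather than resolving it.

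So the verdict: your independence argument and count for $I_T$ are correct, and your asymptotic optimization is correct, but the proof is incomplete where it matters. The missing idea is not a better independent set --- yours is better --- but the observation that a \emph{smaller}, more rigid independent set (exactly one element of $X$ in the anchor set) makes the $r$-fold ratio collapse to the already-bounded $r=1$ ratio, at which point nothing new needs to be proved. Without either supplying the full non-asymptotic estimates for your $|I_T|$ or switching to the paper's restricted set, the second inequality of the lemma is not established.
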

 \begin{proof}
  The first part of the inequality follows from the fact that $\bar{G}_d$ is homomorphic to $U_d (r,m,k)$ as discussed before. It is easy to see that $U_d(r,m,k)$ is vertex transitive. Therefore, $\chi_f \left( U (r,m,k) \right) = \frac{\lvert V \left( U (r,m,k) \right)\rvert}{ \alpha \left( U (r,m,k)\right)}$ where $\alpha(.)$ is the independence number of the graph. To upper bound the ratio $\lvert V \rvert/ \alpha$, we provide a lower bound for the independence number by constructing a suitably large independent set. 

Consider $p$ elements from the set $[m]$ and denote the set by $Z$. Let $1 \leq p \leq m-k+1$. Pick one out of the $p$ elements in $Z$, say $z$. Now create vertices $(X,A)$ as follows: $z \in X$ and choose $k-1$ elements out of remaining $m-p$ elements. Then choose $r-1$ elements out of this and put in $X$ apart from $z$. The remaining $k-r$ elements form $A$ . Let $I_z$ denote the set of $(X,A)$ vertices created by the above method. Since $z \in X,~ \forall (X,A) \in I_z$, $I_z $ is an independent set by the definition of $U (r,m,k)$. We now argue that $\bigcup \limits_{x \in Z} I_x =I$ is an independent set. We need to argue that there cannot be an edge between $(X,A) \in I_x$ and $(Y,B) \in I_y$ for $x \neq y$. $x \in X$ but $x \notin B$ because when $I_y$ (and in particular $(Y,B) \in I_y$) is formed $x$ is not considered at all. Similarly, $y \in Y$ but $y \notin A$. Hence, $I$ is an independent set. $\lvert I_z\rvert = {m-p \choose k-1} {k-1 \choose r-1}$. 

     Hence , $\alpha \left ( U (r,m,k) \right) \geq \max \limits_{1 \leq p \leq m-k+1} p {m-p \choose k-1} {k-1 \choose r-1} $. Substituting the lower bound , we have the following:
\begin{align}
    \chi_f \left( U (r,m,k) \right)/ \frac{k}{r} \leq \frac { \lvert V \rvert } {\alpha \frac{k}{r}} & \leq \frac{{m \choose k} {k \choose r}}{ \max \limits_{1 \leq p \leq m-k+1} p {m-p \choose k-1} {k-1 \choose r-1} \frac{k}{r} }  \\ 
        \hfill                                                                                                                                   & = \frac{{m \choose k} {k \choose r}}{ \max \limits_{1 \leq p \leq m-k+1} p {m-p \choose k-1} {k \choose r}  }  \\
        \hfill                                                                                                                                  &  =\frac{{m \choose k}} {\max \limits_{1 \leq p \leq m-k+1} p {m-p \choose k-1} }
        \hfill                                                                                                                                  &\hspace{-50pt} = \frac{m {m-1 \choose k-1}} {k \max \limits_{1 \leq p \leq m-k+1} p {m-p \choose k-1}}  
\end{align}
  The last expression is identical to the ratio of $\chi_f$ to $\chi_\ell$ of $U(m,k)$ for the directed local chromatic number in (\ref{eq1}) in Theorem \ref{localbndtheorem}. We have seen the ratio is upper bounded by $\frac{5}{4}\text{e}^2$ when $k \geq 2$ in Theorem \ref{localbndtheorem} . Hence, we conclude our proof. 
\end{proof}

\begin{proof}[Proof of Theorem \ref{fraclocalbndthm}]
From the lemma above, $\chi_f(\bar{G}_d)/\left( \chi_\ell(\bar{G}_d,r)/r \right) \leq \frac{5}{4}\text{e}^2 $.  

Hence,  $ \chi_f(\bar{G}_d)/\left( \chi_{f\ell} (\bar{G}_d) \right) = \chi_f(\bar{G}_d)/\left( \inf \limits_{r} \chi_\ell(\bar{G}_d,r)/r \right) = \sup \limits_{r} \chi_f(\bar{G}_d)/\left(  \chi_\ell(\bar{G}_d,r)/r \right) \leq \frac{5}{4}\text{e}^2$.
\end{proof}
	
\section{Conclusion}
        We presented novel index coding upper bounds based on local and fractional local chromatic numbers for the case when user requests are distinct. We presented a problem instance where the additive gap between the new bounds and the bound based on fractional chromatic number is arbitrarily large. 
We also proved that the multiplicative gap between the new bounds and the bound based on fractional chromatic number is at most a constant. Studying local coloring concepts in the context of the general index coding problem with overlapping user requests is an interesting direction for future study.

\pagenumbering{arabic}
\bibliographystyle{IEEEtran}
\bibliography{fracchrom.bib}

\begin{thebibliography}{10}
\providecommand{\url}[1]{#1}
\csname url@samestyle\endcsname
\providecommand{\newblock}{\relax}
\providecommand{\bibinfo}[2]{#2}
\providecommand{\BIBentrySTDinterwordspacing}{\spaceskip=0pt\relax}
\providecommand{\BIBentryALTinterwordstretchfactor}{4}
\providecommand{\BIBentryALTinterwordspacing}{\spaceskip=\fontdimen2\font plus
\BIBentryALTinterwordstretchfactor\fontdimen3\font minus
  \fontdimen4\font\relax}
\providecommand{\BIBforeignlanguage}[2]{{%
\expandafter\ifx\csname l@#1\endcsname\relax
\typeout{** WARNING: IEEEtran.bst: No hyphenation pattern has been}%
\typeout{** loaded for the language `#1'. Using the pattern for}%
\typeout{** the default language instead.}%
\else
\language=\csname l@#1\endcsname
\fi
#2}}
\providecommand{\BIBdecl}{\relax}
\BIBdecl

\bibitem{birk1998informed}
Y.~Birk and T.~Kol, ``Informed-source coding-on-demand (iscod) over broadcast
  channels,'' in \emph{INFOCOM'98. Seventeenth Annual Joint Conference of the
  IEEE Computer and Communications Societies. Proceedings. IEEE}, vol.~3.\hskip
  1em plus 0.5em minus 0.4em\relax IEEE, 1998, pp. 1257--1264.

\bibitem{bar2006index}
Z.~Bar-Yossef, Y.~Birk, T.~Jayram, and T.~Kol, ``Index coding with side
  information,'' in \emph{Foundations of Computer Science, 2006. FOCS'06. 47th
  Annual IEEE Symposium on}.\hskip 1em plus 0.5em minus 0.4em\relax IEEE, 2006,
  pp. 197--206.

\bibitem{alon2008broadcasting}
N.~Alon, E.~Lubetzky, U.~Stav, A.~Weinstein, and A.~Hassidim, ``Broadcasting
  with side information,'' in \emph{Foundations of Computer Science, 2008.
  FOCS'08. IEEE 49th Annual IEEE Symposium on}.\hskip 1em plus 0.5em minus
  0.4em\relax IEEE, 2008, pp. 823--832.

\bibitem{blasiak2010index}
A.~Blasiak, R.~Kleinberg, and E.~Lubetzky, ``Index coding via linear
  programming,'' \emph{arXiv preprint arXiv:1004.1379}, 2010.

\bibitem{maleki2012index}
H.~Maleki, V.~Cadambe, and S.~Jafar, ``Index coding: An interference alignment
  perspective,'' in \emph{Information Theory Proceedings (ISIT), 2012 IEEE
  International Symposium on}.\hskip 1em plus 0.5em minus 0.4em\relax IEEE,
  2012, pp. 2236--2240.

\bibitem{effros2012equivalence}
M.~Effros, S.~Rouayheb, and M.~Langberg, ``An equivalence between network
  coding and index coding,'' \emph{arXiv preprint arXiv:1211.6660}, 2012.

\bibitem{erdos1986coloring}
P.~Erd\H{o}s, Z.~F{\"u}redi, A.~Hajnal, P.~Komj{\'a}th, V.~R{\"o}dl, and
  A.~Seress, ``Coloring graphs with locally few colors,'' \emph{Discrete
  mathematics}, vol.~59, no.~1, pp. 21--34, 1986.

\bibitem{korner2005local}
J.~K{\"o}rner, C.~Pilotto, and G.~Simonyi, ``Local chromatic number and
  {S}perner capacity,'' \emph{Journal of Combinatorial Theory, Series B},
  vol.~95, no.~1, pp. 101--117, 2005.

\bibitem{korner1998zero}
J.~Korner and A.~Orlitsky, ``Zero-error information theory,'' \emph{Information
  Theory, IEEE Transactions on}, vol.~44, no.~6, pp. 2207--2229, 1998.

\bibitem{Simonyipersonal}
G.~Simonyi, G.~Tardos, and A.~Zsban, ``Personal communication.''

\bibitem{tehrani2012bipartite}
A.~Tehrani, A.~Dimakis, and M.~Neely, ``Bipartite index coding,'' in
  \emph{Information Theory Proceedings (ISIT), 2012 IEEE International
  Symposium on}.\hskip 1em plus 0.5em minus 0.4em\relax IEEE, 2012, pp.
  2246--2250.

\bibitem{simonyi2006local}
G.~Simonyi and G.~Tardos, ``Local chromatic number, {K}y {F}an's theorem, and
  circular colorings,'' \emph{Combinatorica}, vol.~26, no.~5, pp. 587--626,
  2006.

\bibitem{shokrollahi2006raptor}
A.~Shokrollahi, ``Raptor codes,'' \emph{Information Theory, IEEE Transactions
  on}, vol.~52, no.~6, pp. 2551--2567, 2006.

\bibitem{simonyi2010directed}
G.~Simonyi and G.~Tardos, ``On directed local chromatic number, shift graphs,
  and {B}orsuk-like graphs,'' \emph{Journal of Graph Theory}, vol.~66, no.~1,
  pp. 65--82, 2010.

\bibitem{godsil2001algebraic}
C.~Godsil, G.~Royle, and C.~Godsil, \emph{Algebraic graph theory}.\hskip 1em
  plus 0.5em minus 0.4em\relax Springer New York, 2001, vol.~8.

\end{thebibliography}

	\end{document}